\documentclass[a4paper,journal,twoside]{IEEEtran}
\IEEEoverridecommandlockouts

\newcommand\hmmax{0}
\newcommand\bmmax{0}

\usepackage[ruled,vlined,linesnumbered]{algorithm2e} 
\usepackage{graphicx}
\usepackage{tikz}
\usepackage{pgfplots}
\usepackage{xcolor}
\usepackage{color, colortbl}
\usepackage{amsmath}
\usepackage{bbm}
\usepackage{amsfonts, amssymb}
\usepackage{mathtools}
\usepackage{enumerate}
\usepackage{cite}
\usepackage[nolist]{acronym}
\usepackage{amsmath}

\usepackage{booktabs} 		
\usepackage{diagbox}		
\usepackage{upgreek}
\usepackage{bbm}
\usepackage{Files/bm}
\usepackage{Files/winsnotation}
\usepackage{Files/Symbols_OPL}
\usepackage{Files/LVcolours} 

\usepackage{setspace}	 	
\usepackage{comment}
\usepackage{physics}
\usepackage{enumitem}

\newlist{myitemize}{itemize}{3}
\setlist[myitemize,1]{label=1.,leftmargin=1em}
\setlist[myitemize,2]{label=$\rightarrow$,leftmargin=0.75em}
\setlist[myitemize,3]{label=$\diamond$}

\usepackage{array}
\newcolumntype{C}[1]{>{\centering\arraybackslash}p{#1}}

\makeatletter
\def\endthebibliography{%
  \def\@noitemerr{\@latex@warning{Empty `thebibliography' environment}}%
  \endlist
}
\makeatother
\usepackage{amsthm}
\theoremstyle{definition}
\newtheorem{definition}{Definition}
\newtheorem{remark}{Remark}
\newtheorem{theorem}{Theorem}

\usepackage{algpseudocode,amsmath}

\pgfplotsset{compat=1.17}

\begin{document}

\title{Performance Limits of Fault-Tolerant \\ Quantum Error Correction Schemes}



\author{
Lorenzo~Valentini,~\IEEEmembership{Member,~IEEE,} 
Diego Forlivesi,~\IEEEmembership{Graduate~Student~Member,~IEEE,} 
and~Marco~Chiani,~\IEEEmembership{Fellow,~IEEE} 
\thanks{Lorenzo Valentini and Marco Chiani are with the Department of Electrical, Electronic, and Information Engineering ``Guglielmo Marconi'' and CNIT/WiLab, University of Bologna, V.le Risorgimento 2, 40136 Bologna, Italy. E-mail: \{lorenzo.valentini13, marco.chiani\}@unibo.it. 
Diego Forlivesi was previously with the University of Bologna and is currently with Quantinuum, Terrington House, Cambridge, CB2 1NL, United Kingdom. E-mail: diego.forlivesi@quantinuum.com. \\ 
This work was supported in part by  Quantum Labs, Cisco Systems, USA.
}
}

\maketitle 

\begin{acronym}
\small
\acro{AWGN}{additive white Gaussian noise}
\acro{BCH}{Bose–Chaudhuri–Hocquenghem}
\acro{BC}{bubble clustering}
\acro{CDF}{cumulative distribution function}
\acro{CRC}{cyclic redundancy code}
\acro{FT}{fault-tolerant}
\acro{LDPC}{low-density parity-check}
\acro{LUT}{lookup table}
\acro{ML}{maximum likelihood}
\acro{MWPM}{minimum weight perfect matching}
\acro{QECC}{quantum error correcting code}
\acro{PDF}{probability density function}
\acro{PMF}{probability mass function}
\acro{MPS}{matrix product state}
\acro{WEP}{weight enumerator polynomial}
\acro{WE}{weight enumerator}
\acro{BD}{bounded distance}
\acro{QLDPC}{quantum low density parity check}
\acro{CSS}{Calderbank, Shor, and Steane}
\acro{MST}{minimum spanning tree}
\acro{PruST}{pruned spanning tree}
\acro{RFire}{Rapid-Fire}
\acro{UF}{union-find}
\acro{LEMON}{library for efficient modeling and optimization in networks}
\acro{STM}{spanning tree matching}
\acro{i.i.d.}{independent identically distributed}
\acro{QEC}{quantum error correction}
\acro{BP}{belief propagation}
\acro{FT}{fault-tolerant}
\acro{MW}{minimum weight}
\end{acronym}
\setcounter{page}{1}

\begin{abstract}
Quantum error correction (QEC) is essential for realizing scalable quantum computation. 
However, when evaluating its benefits, most analyses assume idealized components, overlooking the imperfections inherent in realistic fault-tolerant (FT) implementations. 
In this paper, we investigate the performance of QEC schemes taking into account that quantum gates and measurements are themselves error-prone. 
We derive bounds for the failure probability of Shor-style FT-QEC schemes using limited structural information, such as the number of flag qubits and quantum gates.
Our analysis separates and quantifies two key contributors to the failure rate: decoding errors and residual errors arising from circuit-level faults. 
The derived bounds highlight fundamental limitations in Shor-style FT-QEC performance and quantify how circuit imperfections degrade error correction capabilities, under the assumption of depolarizing noise.
\end{abstract}

\begin{IEEEkeywords} Quantum Error Correction, Quantum Computing, Fault-Tolerant Quantum Computing, Fault-Tolerance.
\end{IEEEkeywords}

\section{Introduction}

One of the main challenges for quantum information systems concerns mitigating the effects of noise resulting from the inevitable quantum-environment interactions  \cite{Sho:95,Laf:96,Kni:97,FleShoWin:08,RehShi:21}. 
To preserve quantum states against noise, \ac{QEC} is adopted to encode logical qubits into larger systems of physical qubits, employing ancillary qubits to detect and correct errors without violating the no-cloning theorem.
In this context, \ac{QEC} is therefore crucial for achieving quantum computation, quantum memories, and quantum communication systems~\cite{FowMarMar:12, Ter:15, MurLiKim:16, Bab:19, Val24:QComm, Val24:SurfaceRadiation, Sha25:DistributedQEC}. 

Stabilizer codes represent a significant class of quantum error-correcting codes~\cite{Got:09}.
Within this category, \ac{CSS} codes are of particular importance for practical implementations~\cite{CalSho:96, Ste:96}.
In fact, despite the increased overhead associated with \ac{CSS} codes compared to their non-\ac{CSS} counterparts, their use simplifies the design of fault-tolerant quantum computing procedures.
Indeed, the structure of CSS codes allows for the transversal application of a CNOT gate between two code blocks to be a fault-tolerant operation, effectively performing a logical CNOT between the corresponding encoded qubits.
Furthermore, for self-dual CSS codes (for which the $\M{X}$ and $\M{Z}$ generators coincide), applying a transversal Hadamard gate implements a logical Hadamard on the encoded qubits.
Additionally, if the weight of every stabilizer generator is a multiple of four, the transversal application of the $\M{S}$ gate also constitutes a fault-tolerant gadget for the code \cite{Got:96}.
From a decoding perspective, CSS codes offer a significant advantage.
Since each stabilizer generator consists solely of either $\M{X}$ or $\M{Z}$ Pauli operators, $\M{X}$ and $\M{Z}$ errors can be decoded independently, which greatly reduces the overall complexity of the decoder.

As a particular cases of \ac{CSS} codes, topological codes offer the advantage of requiring only nearest-neighbour interactions between qubits~\cite{BraKit:98}, a crucial feature for architectures with limited qubit interaction capabilities.
Since the generator weights of topological codes are typically low, they can be interpreted as structured \ac{QLDPC}.
On the other hand, unstructured \ac{QLDPC} codes admit the use of long-range interactions between qubits~\cite{Fue21:AsymCodes, BreNikEbe:21a, PanKal22:QLDPC, Lev:QuantumTanner22}.
Over the past two decades, various families of topological quantum codes, such as surface codes and color codes, have been proposed~\cite{DenKitLan:02,FowSteGro:09,Bom06:colorCodes,HorFowDev:12,ValForChi25:CylMob}.
Among these, codes that can be embedded in a planar layout are particularly appealing, as they simplify implementation across a variety of qubit technologies~\cite{AchRajAle:22, KriSebLac:22, ZhaYouYe:22, BluDolEve:23, Ach:24}.
Also, the existence of efficient decoding algorithms is an important aspect since decoding latency could induce more decoherence~\cite{Del:21, Hig:22, ForValChi:24STM, Bro:23, For25:BubbleCluster, Val25:ImpactDecLatency}.
In this context, performance analysis serves as a crucial tool for benchmarking different code families and providing system design guidelines~\cite{Ash00:MacwPartI, Ash00:MacwPartII, ForValChi24:JSAC, ForValChi25:MacW, Ash25:CSSBounds}.

\Ac{QEC} is essential for realizing reliable quantum computation. However, for \ac{QEC} to be practically useful, it must be implemented in a \ac{FT} manner~\cite{Got:96, Sho96:CatState, Ste97:SteaneGadget}. 
Quantum information is intrinsically fragile, being highly susceptible not only to decoherence, but also to operational errors, such as errors induced by quantum gates~\cite{AchRajAle:22}.
As a result, practical deployment of \ac{QEC} in quantum architectures requires not only robust codes but also efficient \ac{FT} circuits to implement the \ac{QEC} operations.

To implement a \ac{QEC} scheme it is required a syndrome extraction circuit.
In general, three main approaches exist for \ac{FT} syndrome extraction.
The first is the use of repeated, or Shor-style measurements, where each stabilizer is measured using ancilla qubits, and the measurement is repeated multiple times to suppress errors and ensure reliability \cite{Sho96:CatState}.
The second approach is the Steane method, a gadget specifically designed for CSS codes.
It involves preparing encoded logical $\ket{0}$ and $\ket{+}$ ancilla states, followed by transversal CNOT gates that propagate $\M{X}$ and $\M{Z}$ errors from the data block onto the ancilla, which are then measured to extract the syndrome \cite{Ste97:SteaneGadget}. 
The third approach is the Knill method, which uses quantum teleportation: the data is teleported through a prepared encoded ancilla, and syndrome information is obtained as part of the teleportation process \cite{Kni04:SyndExt}.
Both the Steane and Knill methods require the fault-tolerant preparation of encoded ancilla states, which can be resource-intensive \cite{CroDiVTer07:reasourceSteane}. 
For this reason, in this work we restrict our attention to Shor-style syndrome extraction schemes based on short ancilla circuits with repeated measurements, while noting that the Steane and Knill approaches are somewhat easier to analyze, since repeated measurements are not necessary; for example, Steane-style syndrome extraction was analyzed in Ref.~\cite{ForValChi25:MacW}.
Even in this setting, however, a naive implementation that directly measures each stabilizer is generally not \ac{FT}.
In particular, errors from two-qubit gates can propagate into higher-weight errors on the data qubits. 
These are known as hook errors~\cite{DenKitLan:02}.
A \ac{FT} approach to syndrome extraction involves the use of flag qubits, auxiliary qubits entangled with the syndrome qubit such that any hook errors also affect the flag qubits \cite{Cha18:FlagErrCorr, Cha18:FlagsDet}.
By analyzing the resulting flag pattern, the location of faults can be inferred, allowing for appropriate correction of the induced data qubit errors.
An alternative fault-tolerant syndrome extraction scheme employs cat states (or GHZ states) \cite{Sho96:CatState}. 
These can be prepared fault-tolerantly by measuring their $\M{X}$-type stabilizer generator using flag-based circuits or by post-selection methods~\cite{ForAma25:FTCSS}.
Each qubit in the cat state then interacts with a corresponding data qubit via a controlled two-qubit gate, with the cat qubit as the control, tailored to the stabilizer being measured.
Afterward, a transversal Hadamard is applied to the cat state, followed by measurement in the computational basis.
By analyzing the parity of the measurement outcomes, the eigenvalue of the stabilizer is determined.
Note that any single error affecting a cat state qubit propagates to at most one data qubit, thereby preserving fault-tolerance.

Thus, besides conventional performance analysis on quantum code, a fundamental problem in the theory of \ac{QEC} is to characterize the performance limits of \ac{FT}-\ac{QEC} schemes under faulty information processing.
This includes deriving upper bounds on achievable logical error rates, which serve as important design guidelines for building \ac{FT} quantum systems.
Recent works have taken important steps toward explicit circuit-level analysis without relying on Monte Carlo simulation. 
In particular, tensor-network-based approaches have been developed to compute quantum weight enumerator polynomials more efficiently, in some cases achieving exponential speedups with respect to standard methods \cite{CaoLac24:LegoTensorEnumerators}. 
Building on this, tensor enumerator frameworks have been introduced to analyze \ac{FT} circuits and noise models by representing both unitary operations and quantum channels as tensor objects, enabling explicit counting of error paths.
While this method enable detailed analysis of small to moderately sized circuits, such as syndrome extraction circuits for rotated distance-five surface codes, their computational cost quickly becomes prohibitive \cite{KulLac25:CircuitTensor}.

This paper seeks to address a gap in the \ac{QEC} literature through an investigation of \ac{FT}-\ac{QEC} schemes. Specifically, we focus on the performance of Shor-style syndrome extraction circuits under the assumption that quantum gates and measurements may fail with some probability. To maintain broad relevance within this class of protocols, we derive performance bounds using only limited information about the circuits, such as the number of flag qubits and quantum gates involved, without relying on detailed assumptions like the exact flag decoding strategy.
Moreover, as typically done in \ac{FT}-\ac{QEC}~\cite{Cha18:FlagsDet, ChaCro19:MagicStateFlag, PaeRei11:GoalyCodeStatePrep, ForAma25:FTCSS}, we assume that errors are \ac{i.i.d.} depolarizing channel.
Although tighter bounds could be obtained for specific circuit implementations, we deliberately choose not to pursue such optimizations in favor of broader applicability.
To be more precise, in the paper we analyze the two error events that could occur in a \ac{FT}-\ac{QEC} scheme: $i)$ the decoding failure of the code; $ii)$ the residual errors, which are unnoticed errors generated by the \ac{FT} circuit itself.
Comparing our novel bounds with conventional \ac{QEC} bounds we show the degradation due to faulty circuits.

The key contributions of the paper can be summarized as:
\begin{itemize}
    \item We derive fundamental performance limits for \ac{QEC} using Shor-style \ac{FT} circuits.
    \item we identify and analyze an intrinsic source of error that is inherent to all \ac{FT}-\ac{QEC} schemes, the residual error;
    \item we verify and validate our analytical findings through Monte Carlo simulations, emulating the whole \ac{FT}-\ac{QEC} processing.
\end{itemize}

This paper is organized as follows. Section~\ref{sec:preliminary} introduces preliminary concepts about \ac{QEC} and its \ac{FT} implementation, along with definitions that will be used through this work. 
In Sections~\ref{sec:Pfaildec} and~\ref{sec:Pres}, we derive analytical performance bounds for the decoding error probability and the residual error probability, respectively. 
Section~\ref{sec:Parameters} focuses on the derivation of the parameters used in these performance bounds for both \ac{FT} schemes and quantum codes. 
Finally, numerical results are presented in Section~\ref{sec:NumRes}.

\section{Preliminaries and Background}
\label{sec:preliminary}

\subsection{Quantum Error Correction: Basics}

A qubit is an element of the two-dimensional Hilbert space $\mathcal{H}^{2}$, with basis $\ket{0}$ and $\ket{1}$ \cite{NieChu:10}. 
The Pauli operators $\M{I}, \M{X}, \M{Z}$, and $\M{Y}$, are defined by  $\M{I}\ket{a}=\ket{a}$, $\M{X}\ket{a}=\ket{a\oplus 1}$, $\M{Z}\ket{a}=(-1)^a\ket{a}$, and $\M{Y}\ket{a}=i(-1)^a\ket{a\oplus 1}$ for $a \in \lbrace0,1\rbrace$. These operators either commute (e.g. $\M{I}\M{X}=\M{X}\M{I}$) or anticommute (e.g., $\M{X}\M{Z}=-\M{Z}\M{X}$) with each other. 
When considering Pauli operators on $n$ qubits one could construct the $\mathcal{G}_n$ Pauli group \cite{Got:09,NieChu:10}.
We indicate with $[[n,k,d]]$ a \ac{QECC} that encodes $k$ information qubits (called logical qubits), into a codeword\footnote{A codeword is any state in the code space.} $\ket{\psi}$ of $n$ qubits (called data or physical or codeword qubits), able to correct all patterns up to $t = \lfloor(d-1)/2 \rfloor$ errors and, usually, some higher-weight error patterns. 
%
Using the stabilizer formalism, we choose $n-k$ independent commuting operators 
$G_i \in \mathcal{G}_n$, called generators, where $\mathcal{G}_n$ denotes the Pauli 
group on $n$ qubits. The subgroup $S \subset \mathcal{G}_n$ 
generated by these operators is called the stabilizer group and forms an abelian 
subgroup of $\mathcal{G}_n$ that does not contain $-I$~\cite{Got:09,NieChu:10}.
The code $\mathcal{C}$ is the set of quantum states (or codewords) $\ket{\psi}$ satisfying 
$\M{G}_i \ket{\psi}=\ket{\psi},\, i=1, 2, \ldots, n-k$. 
The error syndrome is obtained by measuring stabilizer generators using ancilla qubits.

Other useful qubit gates which are used in this paper are the Hadamard gate, the CNOT gate, and the CZ gate. 
A Hadamard gate, described as $\M{H} = (\M{X} + \M{Z}) / \sqrt{2}$, 
implements a basis change between the $\M{Z}$ and $\M{X}$ bases.
A CNOT gate acts on two qubits, one referred to as \emph{control} and the other as \emph{target}, according to $\ket{c}\ket{t} \mapsto \ket{c}\ket{t \oplus c}$.
Similarly, a CZ gate acts on two qubits according to $\ket{a}\ket{b} \mapsto (-1)^{ab}\ket{a}\ket{b}$.

\subsection{Quantum Error Correction: Performance}

Given a $[[n, k, d]]$ quantum code with weight enumerator polynomial $A(z) = \sum_w A_w z^w$~\cite{CaoChuLac:24}, its intrinsic error correction capability guarantees that the codeword error rate\footnote{Note that we intentionally avoid using the term ``logical error rate'' here, as every type of error must be treated as a failure, not just logical errors.} over a depolarizing channel with error probability $p$ is
\begin{align}
\label{eq:PLwithAz}
    p_\mathrm{L} = 1 - \sum_{w=0}^n \frac{A_w}{4^k} p^w (1-p)^{n-w}
\end{align}
considering that some errors are in the stabilizer set.
For example, for the $[[7,1,3]]$ Steane code we have $A(z) = 4 + 84 z^4 + 168^6$.
Moreover, for the $[[13,1,3]]$ surface code we have $A(z) = 4 + 32z^3 + 48z^4 + 96z^5 + 304z^6 + 768z^7 + 1812z^8 + 3456z^9 + 4464z^{10} + 3552z^{11} + 1560z^{12} + 288z^{13}$.

Given a $[[n, k, d]]$ quantum code over a depolarizing channel with error probability $p$, we can upper bound its logical error probability by 
\begin{align}
\label{eq:PLBD}
    p_\mathrm{L} \le \sum_{e=0}^n \binom{n}{e} f(e) \, p^e (1-p)^{n-e}
\end{align}
where $f(e)$ is the probability that the decoder fails the decoding when there are exactly $e$ errors in the data qubits.
As a particular case, for a bounded distance decoder we have that 
\begin{align}
\label{eq:PFGivenE_BD}
    f(e) = 
    \begin{dcases}
        0 & 0 \le e\le t \\
        1 & t < e \le n\,.
    \end{dcases}
\end{align}
Adopting \eqref{eq:PFGivenE_BD} instead of the actual $f(e)$ of the decoder provides a valid upper bound that matches the slope in the low error rate regime, but with a constant performance gap (see Fig~\ref{fig:BDAnalysis}).
To match the slope and close the gap, in the low-error regime, we can use 
\begin{align}
\label{eq:PFGivenE_BDwithBeta}
    f(e) = \begin{dcases}
        0 & 0 \le e\le t \\
        1-\beta_{t+1} & e=t+1\\
        1 & t+1 < e \le n\,.
    \end{dcases}
\end{align}
For example, considering the \ac{MWPM} decoder, we have that the $[[13, 1, 3]]$ surface code has $\beta_2 = 267/351$ and the $[[9, 1, 3]]$ rotated surface code has $\beta_2 = 5/9$~\cite{ForValChi25:MacW, ForValChi24:JSAC}.
As another example, the $[[7, 1, 3]]$ Steane code has $\beta_2 = 2/9$ under \ac{MW} decoder.
Having knowledge of the weight enumerator $A(z)$ we can slightly improve \eqref{eq:PFGivenE_BDwithBeta} by including the intrinsic error correction capability of the stabilizer quantum code as
\begin{align}
\label{eq:PFGivenE_ver3}
    f(e) = \begin{dcases}
        0 & 0 \le e\le t \\
        1-\beta_{t+1} & e=t+1\\
        1-\frac{A_e}{4^k \binom{n}{e}} & t+1 < e \le n
    \end{dcases}
\end{align}
However, as can be seen in Fig.~\ref{fig:BDAnalysis}, the inclusion of $A(z)$ does not significantly affect the tightness of the bound.
For a more detailed discussion on the gap between actual and bounded distance performance, see~\cite{ForValChi25:MacW, ForValChi24:JSAC}.

\begin{figure}[t]
    \centering
    \resizebox{\columnwidth}{!}{ 
%
%
\definecolor{mycolor1}{rgb}{0.92157,0.58039,0.52549}%
\definecolor{mycolor2}{rgb}{0.95294,0.87059,0.54118}%
\definecolor{mycolor3}{rgb}{0.49412,0.49804,0.60392}%
\definecolor{mycolor4}{rgb}{0.82745,0.52941,0.67059}%
\begin{tikzpicture}

\begin{axis}[%
name = plot,
width=4.5in,
height=3.5in,
at={(0in,0in)},
scale only axis,
xmode=log,
xmin=0.001,
xmax=0.2,
xminorticks=true,
xlabel style={font=\color{white!15!black}, font = \Large},
xlabel={Channel Error Probability $p$},
tick label style={black, semithick, font=\Large},
ymode=log,
ymin=1e-05,
ymax=1,
yminorticks=true,
ylabel style={font=\color{white!15!black}, font = \Large},
ylabel={ Logical Qubit Error Probability $p_L$},
axis background/.style={fill=white},
title style={font=\bfseries},
legend style={at={(0.03,0.97)}, anchor=north west, legend cell align=left, align=left, draw=white!15!black}
]
\addplot [color=goodRed, line width=1.5pt, draw=none, mark=o, only marks, mark size = 3pt, mark options={solid, goodRed}]
  table[row sep=crcr]{%
0.001	1.763877e-05\\
0.002	6.68696e-05\\
0.005	0.000423116\\
0.01	0.00171359\\
0.02	0.00637511\\
0.05	0.0388048\\
0.1	    0.128041\\
0.2	    0.350877\\
};
\addlegendentry{$[[13,1,3]]$ - Simulation}

\addplot [color=black, densely dotted, line width=1.5pt]
  table[row sep=crcr]{%
0.001	7.74301398605697e-05\\
0.002	0.000307458155811804\\
0.005	0.00187982467076163\\
0.01	0.00724894367793521\\
0.02	0.026951262568848\\
0.05	0.135423859739781\\
0.1	0.3786550197418\\
0.2	0.766353779097601\\
};
\addlegendentry{$[[13,1,3]]$ - Upper Bound \eqref{eq:PLBD} using \eqref{eq:PFGivenE_BD}}

\addplot [color=goodOrange, line width=1.5pt]
  table[row sep=crcr]{%
0.001	1.87989692222347e-05\\
0.002	7.55029411617094e-05\\
0.005	0.00047732718347811\\
0.01	0.00194137850668953\\
0.02	0.0079643207798688\\
0.05	0.0511276860644095\\
0.1	    0.192628098379648\\
0.2	    0.562669250052097\\
};
\addlegendentry{$[[13,1,3]]$ - Upper Bound \eqref{eq:PLBD} using \eqref{eq:PFGivenE_BDwithBeta}}

\addplot [color=goodBlue, dashed, line width=1.5pt]
  table[row sep=crcr]{%
0.001	1.87989612820191e-05\\
0.002	7.550281506794e-05\\
0.005	0.000477322367636324\\
0.01	0.00194130429491231\\
0.02	0.00796321948915221\\
0.05	0.051093373643265\\
0.1	    0.192249484028548\\
0.2	    0.559410664144896\\
};
\addlegendentry{$[[13,1,3]]$ - Upper Bound \eqref{eq:PLBD} using \eqref{eq:PFGivenE_ver3}}

\end{axis}
\end{tikzpicture}%
    }
    \caption{Comparison between upper bounds on the logical error probability.}
    \label{fig:BDAnalysis}
\end{figure}

\subsection{Quantum Error Correction: Fault-Tolerance}

\begin{definition}[Faulty location]
    A faulty location in a \ac{FT} quantum circuit is an instance of a basic circuit element, such as a quantum gate or measurement, at which a fault modeled as a Pauli error may occur due to imperfections in physical implementation.
\end{definition}

\begin{definition}[\ac{FT} gadget condition]
    A gadget is said to be $t$-\ac{FT} if any set of up to $e \le t$ Pauli errors occurring at faulty locations within the gadget results in an output with at most $e_\mathrm{f} \le e$ Pauli errors.
\end{definition}

During standard syndrome extraction, a fault arising from one of the two-qubit gates acting on the data qubits can propagate into a higher-weight data qubit error. To ensure the gadget is \ac{FT}, two main alternatives can be considered. One of these is the flag-based syndrome extraction scheme, which introduces additional flag qubits. These qubits are entangled and disentangled during the interaction between the syndrome qubit and the data qubits.
This configuration ensures that any hook error of weight $\leq t$ is also propagated to the flag qubits, making it detectable and correctable. 
For example, in Fig.~\ref{fig:cat_flag}a a flag-based scheme using the approach of \cite{Cha18:FlagErrCorr} is reported. 
Here, an $\M{X}$ error on qubit $a_1$ in position B propagates to the data qubits $q_a$ and $q_b$. However, the same error also propagates to the flag qubit $f_1$ and is detected through its measurement. Based on the flag measurement outcome, the error can be identified and corrected accordingly.
Note that an error is propagated to a flag qubit only if it occurs after the flag qubit is entangled and before it is disentangled from the syndrome qubit. Flag gadgets are specifically designed so that any error occurring outside this window is not harmful. For instance, an error occurring on qubit $a_1$ after the initial Hadamard gate (position A) propagates to all measured data qubits, effectively implementing a stabilizer of the code and therefore leaving the encoded state unaffected. Similarly, an error on qubit $a_1$ in position C propagates only to data qubit $q_a$. In this case, a weight-one error leads to another weight-one error, which is not considered problematic.

\begin{figure}[t]
    \centering
    \resizebox{\columnwidth}{!}{ 
        \input{Figures/cat_flag.txt}
    }
    \caption{Examples of FT syndrome extraction of a generator with weight four: (a) Flag-based syndrome extraction scheme; (b) Cat-based syndrome extraction scheme.}
    \label{fig:cat_flag}
\end{figure}

Another \ac{FT} approach to syndrome extraction consists of using a scheme based on a cat state of the form $\ket{00\dots0} + \ket{11\dots1}$. 
In this scheme, each data qubit interacts with a different syndrome qubit, as illustrated in Fig.~\ref{fig:cat_flag}b, where a cat-based scheme using the methodology of \cite{Sho96:CatState} is employed. 
As a result, an error occurring on a cat qubit propagates to at most a single data qubit, thereby preserving the \ac{FT} property of the gadget.
An essential component of this construction is the \ac{FT} preparation of the cat state. 
During this stage, it is crucial to ensure that any faults do not propagate into higher-weight errors. 
One method to achieve this is to prepare the cat state by measuring its $\M{X}$ stabilizer generator using flag-based schemes. Alternatively, one can employ preparation gadgets based on post-selection.

Let us define as $\mathcal{F} = \{ \ket{\psi}_\mathrm{out} \neq e^{\imath \theta}\ket{\psi}_\mathrm{in} \,,~ \forall\theta \}$ the failure event in which the encoded state $\ket{\psi}_\mathrm{in}$ is not preserved at the end of the decoding phase.
We denote as $P_\mathrm{fail}$ the probability $\Prob{\mathcal{F}}$.

\begin{definition}[Residual Error]
    We define a residual error as an error that is not part of the stabilizer group of the code, that occurs in the data qubits or is propagated to them, and cannot be detected by the current syndrome extraction procedure.
\end{definition}

As an example, considering a single round of syndrome extraction, a residual error could be generated by the last controlled gate on data qubit of the last syndrome bit extraction.
This failure is endemic to the syndrome extraction whenever quantum gates introduce errors.
We define as $\mathcal{R}$ the event in which a residual error is present.
Then, we can bound $P_\mathrm{fail}$ as
\begin{align}
\label{eq:Pfail_Def}
     P_\mathrm{fail} &= \Prob{\mathcal{F} \mid \mathcal{R}}\Prob{\mathcal{R}} + \Prob{\mathcal{F} \mid \mathcal{R}^c} \Prob{\mathcal{R}^c} \notag \\
     &\le \Prob{\mathcal{R}} + \Prob{\mathcal{F} \mid \mathcal{R}^c} \Prob{\mathcal{R}^c}\,.
\end{align}
Note that, the bound in \eqref{eq:Pfail_Def} is tight due to the fact that the event $\mathcal{F} \mid \mathcal{R}$ almost always occurs. 
The only scenario in which it does not occur is when the decoder fails and the residual error is the same logical operator resulting from the erroneous correction of the decoder, which is unlikely.

\section{Decoder Error Correction Analysis}
\label{sec:Pfaildec}
We denote as $P_\mathrm{fail, dec}$ the probability $\Prob{\mathcal{F} \mid \mathcal{R}^c}$ representing the failure event due to unsuccessful decoding phase after a cycle of \ac{FT} error correction, without considering residual errors.
In particular, we define as \ac{FT} error correction cycle a series of $\rv{R}$ rounds of \ac{FT} syndrome extraction, where $\rv{R}$ is a random variable representing the number of syndrome extractions required to satisfy the policy: observing the same syndrome $t+1$ consecutive times~\cite{Sho96:FT,Bro23:FTConsecutive}.
In the following, we aim at deriving an upper bound for $P_\mathrm{fail, dec}$.

We consider that two primary sources of error are present: $i)$ the channel, which introduces errors before \ac{FT} decoding; $ii)$ the error faults that occur during syndrome extraction at faulty locations.
Specifically, we assume that channel errors are \ac{i.i.d.} with a qubit error probability $p$, while errors in gates and measurements during \ac{FT} operations are also \ac{i.i.d.}, with an error probability denoted by $p_\mathrm{FT}$.
Both of these error sources are modeled as depolarizing channels.
Let us define as $\rv{C}$ the random variable representing the number of errors due to the channel, with \ac{PMF} 
\begin{align}
    p_\rv{C}(c) = \binom{n}{c} p^c (1-p)^{n-c}
\end{align}
Also, let $\rv{S}$ denote the random variable representing the total number of faults that occur throughout the entire $t$-\ac{FT} syndrome extraction circuit.
Then, it can be shown that (see Appendix~\ref{app:PSofs})
\begin{align}
\label{eq:PSofs}
    p_\rv{S}(s) = \sum_{\ell=\lceil s/N_\mathrm{FL}\rceil}^{s} & \sum_{j=0}^{\ell} (-1)^j \binom{\ell}{j} \binom{(\ell-j)N_{\mathrm{FL}}}{s} \notag \\ 
    &\times \frac{
p_{\mathrm{FT}}^{\,s}(1-p_{\mathrm{FT}})^{\ell N_{\mathrm{FL}}-s}
}{
\bigl[1-(1-p_{\mathrm{FT}})^{N_{\mathrm{FL}}}\bigr]^\ell
} P(\ell;t)
\end{align}
where $N_\mathrm{FL}$ is the number of faulty locations in which an error can occur\footnote{Note that $N_\mathrm{FL}$ depends on the code distance $d$. 
However, for the sake of presentation, we refer here to $N_\mathrm{FL}(d)$ as $N_\mathrm{FL}$.}, and $P(\ell; t)$ denotes the probability that exactly $\ell$ syndrome extractions, containing at least one error, occur before the first sequence of $t+1$ consecutive error-free extractions.
In particular, it can be shown that $P(\ell; t)$ follows a geometric distribution (see Appendix~\ref{app:failsBefkSucc})
\begin{align}
\label{eq:Pellt}
    P(\ell; t) &= (1 - p_\mathrm{FT})^{(t+1)N_\mathrm{FL}} (1-(1 - p_\mathrm{FT})^{(t+1)N_\mathrm{FL}})^\ell
\end{align}
where $(1 - p_\mathrm{FT})^{N_\mathrm{FL}}$ is the error-free extraction probability.

As a preliminary upper bound on the logical error probability, we consider the performance of a bounded distance decoder.
This decoder assumes that all error patterns of weight up to the error correction capability $t$ are successfully corrected, while those of weight greater than $t$ are not. 
As such, it provides a conservative estimate due to the fact that any optimal decoder will perform at least as well, and potentially better.
In classical error correction, the bounded distance decoder provides an accurate characterization of optimal performance only for perfect codes. 
For all other codes, optimal decoders can successfully correct some error patterns of weight greater than $t$. 
In quantum error correction, the discrepancy between the bounded distance decoder and the optimal decoder is even more significant due to degeneracy.
In fact, errors that are elements of the stabilizer group are always corrected.
Despite this limitation, the bounded distance bound remains a useful benchmark for evaluating code performance. 
It captures the correct asymptotic behaviour (e.g., slope when $p \ll 1$) of optimal decoders, offering a conservative yet meaningful point of comparison. 
As an example, we provide in Fig.~\ref{fig:BDAnalysis} the comparison between the \ac{BD} decoder and the minimum weight decoder for the $[[13, 1, 3]]$ surface code.

Then, if we search for the probability that at most $t$ errors have occurred into the data qubits we can use the \ac{FT} gadget condition as worst case condition, saying that if $s \le t$ errors have occurred in the syndrome extraction circuit, exactly $s$ errors have been transferred to the data qubits.
This allows us to bound $P_\mathrm{fail, dec}$ as
\begin{align}
\label{eq:Pfail1}
    P_\mathrm{fail, dec} 
    &\le 1 - \sum_{e=0}^t \sum_{m = 0}^e p_\rv{C}(e-m) \, p_\rv{S}(m)\,.
\end{align}

The expression in \eqref{eq:Pfail1} represents an upper bound since some non-failure events are counted as failures. Let us list some of the causes that make this in an upper bound and not an equality.
Firstly, we are considering the bounded distance decoder instead of the adopted good decoder\footnote{Here, we define as good decoder a decoder that ensures the error-correcting capability of the code.}. 
Secondly, we pessimistically assume that $c \le t$ channel errors and $s \le t$ extraction errors always result in $e = c + s$ data qubit errors. However, the actual number of errors could be $e < c + s$ if multiple errors affect the same qubit.
Thirdly, we are not considering that the error correction cycle could finish before encountering $t+1$ non-erroneous rounds of syndrome extraction (e.g., when we have $t+1$ measurement errors on the same syndrome bit).
Fourthly, we have that some faulty locations do not necessarily lead to an error in the data qubit.
In the following, we modify \eqref{eq:Pfail1} to get a tighter upper bound. 

Let us define $\gamma_i$ as the weight of the $i$-th generator, and let $n_\mathrm{flag}(\gamma_i, d)$ denote the number of flag qubits required to construct a fault-tolerant circuit with $\gamma_i$ controls, achieving distance $d$.

\begin{theorem}
\label{th:Pfail2}
We can upper bound the decoding failure occurring after a cycle of \ac{FT} error correction using flag-based syndrome extraction as
\begin{align}
\label{eq:Pfail2}
    &P_\mathrm{fail, dec} \le 1 - \sum_{s=0}^{\tau} p_\rv{S}(s) + (\tau-t) \,p_\rv{S}(t+1) \,P_\mathrm{meas} \notag \\
    &+ \sum_{c=0}^n\sum_{s=0}^{\tau}\sum_{q=0}^s\sum_{e = c}^{\theta} f(e) \,  g (e| c, q) \, p_\rv{C}(c) \, p_\rv{S}(s) p_\rv{Q|S}(q|s)
\end{align}
where
\begin{align}
\label{eq:geGivencs}
    g (e | c, q) &= \binom{n-c}{e-c}\sum_{v=0}^{e-c} (-1)^v \binom{e-c}{v} \left(\frac{e-v}{n}\right)^q \\
    p_\rv{Q|S}(q|s) &= \binom{s}{q} P_\mathrm{fd}^q (1-P_\mathrm{fd})^{s-q}\\
    P_\mathrm{fd} &= \frac{1}{N_\mathrm{FL}} \sum_{i=1}^{n-k}\frac{5\gamma_i-2}{3} + 4 \, n_\mathrm{flag}(\gamma_i, d) \\
    P_\mathrm{meas} &= \sum_{i=1}^{n-k}\left(\frac{2\gamma_i/3+8/3+ 2\,n_\mathrm{flag}(\gamma_i, d)}{N_\mathrm{FL}}\right)^{t+1}\\
    \tau &=\begin{dcases}
        t+1 & \max_i \{\gamma_i/2\} \le t+1\\
        t & \text{otherwise}
    \end{dcases} \\
    \theta &= \min\{n, c+q\}
\end{align}

\end{theorem}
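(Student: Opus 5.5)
We start from the decomposition of $P_\mathrm{fail, dec}=\Prob{\mathcal{F}\mid\mathcal{R}^c}$ according to the total number $\rv{S}$ of faults in the cycle. The channel errors $\rv{C}$ and the circuit faults $\rv{S}$ are independent, with the \ac{PMF}s in \eqref{eq:PSofs}. We then split the law of total probability into three parts.

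The first part is $s>\tau$. On this event we pessimistically declare failure, which contributes $\sum_{s>\tau}p_\rv{S}(s)=1-\sum_{s=0}^{\tau}p_\rv{S}(s)$.

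The second part is $s\le t$. Here the $t$-\ac{FT} gadget condition guarantees that the $s$ faults produce at most $s$ data errors after flag correction. Instead of assuming all $s$ land on data, as in \eqref{eq:Pfail1}, we introduce $\rv{Q}\le\rv{S}$, the number of faults that actually deposit an error on a data qubit. Conditionally on $s$, each fault sits at a uniformly chosen location among the $N_\mathrm{FL}$ ones. Counting the data-affecting fraction gives $P_\mathrm{fd}$, and $\rv{Q}\mid\rv{S}$ is binomial.

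To obtain $P_\mathrm{fd}$ we count, per generator of weight $\gamma_i$, the locations and the depolarizing Pauli components that hit data. Each two-qubit gate has 15 nontrivial Paulis, of which 12 act on the data qubit, so the per-gate probability is $12/15=4/5$. Combined with the flag gates, ancilla preparation and measurements, this should yield $(5\gamma_i-2)/3+4n_\mathrm{flag}$ effective locations. This per-location bookkeeping is the main obstacle. I would carry it out explicitly on the circuit of Fig.~\ref{fig:cat_flag}a, using the rule that errors outside the flag window propagate to stabilizers or single qubits. I would then argue that the count upper bounds the harmful fraction for generic flag circuits.

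Given $c$ channel errors and $q$ data faults placed uniformly at random on the $n$ qubits, we compute the \ac{PMF} $g(e\mid c,q)$ of the final support size $e$. The support must contain the $c$ channel positions plus exactly $e-c$ new ones, which can be chosen in $\binom{n-c}{e-c}$ ways. The $q$ faults must land in these $e$ positions while covering all $e-c$ new ones. Inclusion–exclusion over uncovered new positions gives $\sum_v(-1)^v\binom{e-c}{v}((e-v)/n)^q$, which is \eqref{eq:geGivencs}, with $e$ ranging from $c$ to $\theta=\min\{n,c+q\}$. Weighting by the decoder failure $f(e)$ of \eqref{eq:PLBD}–\eqref{eq:PFGivenE_ver3} and summing over $c,s,q$ gives the quadruple sum. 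The sum may be extended to all $s\le\tau$, since over-counting only loosens the bound.

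The third part is $s=t+1$, included only when $\tau=t+1$. The condition $\max_i\gamma_i/2\le t+1$ ensures that a weight-$(t+1)$ hook error is equivalent, up to a stabilizer, to one of weight at most $t+1$. Hence $t+1$ faults behave like ordinary data errors and are already covered by the quadruple sum via $f(e)$. The single additional failure mechanism is the repeat-until-agree policy being fooled: $t+1$ consecutive faulty measurements of the same syndrome bit end the cycle early with a wrong syndrome. Given $t+1$ faults, they all hit measurement-flipping locations of the same generator $i$ with probability $(m_i/N_\mathrm{FL})^{t+1}$. Here $m_i=2\gamma_i/3+8/3+2n_\mathrm{flag}$ is the count of syndrome-flipping locations, obtained with the same depolarizing weighting as above, namely a $2/3$ fraction of Paulis flipping the measured outcome. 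A union bound over $i$ gives $P_\mathrm{meas}$, and multiplying by $p_\rv{S}(t+1)$ adds the term $(\tau-t)p_\rv{S}(t+1)P_\mathrm{meas}$, which vanishes when $\tau=t$. Adding the three parts proves \eqref{eq:Pfail2}.
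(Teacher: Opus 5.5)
\textbf{Overall structure and the occupancy step.} Your skeleton matches the paper's proof. You use the same three-way treatment: $s>\tau$ is declared a failure; $s\le\tau$ uses a binomial thinning $\rv{Q}\mid\rv{S}$ weighted by $f(e)$; and a union-bound measurement term is active only when $\tau=t+1$. Your derivation of $g(e\mid c,q)$ is correct and more direct than the paper's. The paper first conditions on the binomial number $\rv{U}$ of faults landing outside the channel support, applies the classical occupancy formula on the $n-c$ free qubits, and then collapses the $u$-sum with the binomial theorem. Your single inclusion--exclusion over uncovered new positions reaches \eqref{eq:geGivencs} immediately.

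\textbf{The gap: $P_\mathrm{fd}$.} You assert the target expression rather than derive it, and the model you propose cannot produce it. The theorem is stated in a noise model where every faulty location suffers one of three single-qubit Paulis. Each two-qubit gate contributes two locations, one per qubit, which is why $n_{\mathrm{FL},i}=2\gamma_i+4+6\,n_\mathrm{flag}(\gamma_i,d)$; the normalization is therefore $3N_\mathrm{FL}$. A 15-Pauli two-qubit channel with a $12/15$ data fraction is a different model. It changes what $N_\mathrm{FL}$ counts, and a weight of $4/5$ per gate (or even $14/15$, after adding the propagating ancilla-only Paulis) does not give $(5\gamma_i-2)/3$ per generator.

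\textbf{The count that gives the formula.} Each data gate has two relevant locations. On the data side, all $3$ Paulis are harmful. On the ancilla side, $2$ of $3$ Paulis propagate forward through later controlled gates, except at the last data gate. That is $5$ harmful Paulis out of $6$ per gate, minus $2$ for the last gate, hence $5\gamma_i-2$. The flags then contribute $12\,n_\mathrm{flag}$ in total:
\begin{itemize}
\item the $2\,n_\mathrm{flag}$ ancilla-side locations of the flag CNOTs add $2$ each;
\item the $4$ locations of each flag qubit add $2$ each, under the worst case that a wrong flag outcome triggers a wrong correction on the data (the red path in Fig.~\ref{fig:error_flag}).
\end{itemize}
Ancilla preparation and the first Hadamard are excluded because they yield a stabilizer. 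The second Hadamard and the ancilla measurement are excluded because their errors cannot propagate. Dividing $5\gamma_i-2+12\,n_\mathrm{flag}$ by $3N_\mathrm{FL}$ gives $P_\mathrm{fd}$.

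\textbf{Why your plan would not recover it.} This count is a deliberate, implementation-agnostic overcount that ignores any recovery by the flag decoder. Pruning locations with the flag-window rule would give a smaller, circuit-specific quantity. Also, ancilla preparation and measurement, which you list as contributors, contribute nothing to $P_\mathrm{fd}$. The flag contribution comes precisely from flag faults inducing wrong corrections, which the flag-window rule does not capture.

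\textbf{The same gap in $m_i$.} Your ``$2/3$ of the Paulis'' rule is the right one, though it contradicts the $12/15$ weighting you invoke as ``the same.'' You still need to say which locations it applies to. There are $\gamma_i+4+3\,n_\mathrm{flag}$ of them: all $\gamma_i+4+2\,n_\mathrm{flag}$ ancilla locations, plus the single location on each flag right after its first CNOT.
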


\begin{proof}
    See Appendix~\ref{app:Pfail2}.
\end{proof}

\begin{remark}
    Note that, for all surface codes, rotated surface codes, M\"obius codes, and honeycomb color codes, the condition $\max_i \{\gamma_i/2\} \le t+1$ is always verified.
    For the square-octagon color codes with $d\neq5$ the condition $\max_i \{\gamma_i/2\} \le t+1$ is also verified.
\end{remark}

The key assumptions made in the derivation can be summarized as follows. 
We assume faults independently propagate to data qubits with probability $P_\mathrm{fd}$, yielding a binomial model for the number of propagated errors $\rv{Q}$. 
The total error count combines channel errors and propagated faults and is approximated via an idealized occupancy model $g(e \mid c,q)$, neglecting error cancellations and thus providing a conservative overestimate. 
In estimating $P_\mathrm{fd}$, we ignore recovery by the fault-tolerant flag-based decoder to remain implementation-agnostic. 
Also, we bound the \ac{QEC} decoder performance using $f(e)$. 
Finally, $P_\mathrm{meas}$ is overestimated by neglecting leakage of errors into data qubits.
Due to these assumptions, the derived bound is expected to be tightest at low $p_{FT}$, where fault events are sparse and largely uncorrelated.

\section{Residual Error Analysis}
\label{sec:Pres}

In this section, we derive upper and lower bounds for $\Prob{\mathcal{R}}$, denoted in the following as $P_\mathrm{res}$.
In particular, we derive a lower bound, valid for both flag-based and cat-based syndrome extraction.
Then, for each syndrome extraction scheme we derive an upper bound for $p_\mathrm{FT} \ll 1$.

We begin by defining some quantities that will be used in the subsequent sections.
Let us denote as $v_j^{(\mathrm{z})}$ ($v_j^{(\mathrm{x})}$) the number of $\mathbf{Z}$ ($\mathbf{X}$) operators from the generators acting on the $j$-th qubit.
In other words, $v_j^{(\mathrm{z})}$ ($v_j^{(\mathrm{x})}$) is the weight of the $j$-th column in the $\mathbf{Z}$-part ($\mathbf{X}$-part) of the symplectic representation of the generator matrix.
We denote as $\mathcal{G}_\mathrm{M}$ the larger of the two sets: the indices of all $\mathbf{Z}$-type generators, $\mathcal{G}_z$, and the indices of all $\mathbf{X}$-type generators, $\mathcal{G}_x$.

\subsection{Lower Bound}
To lower bound $P_\mathrm{res}$ we can initially consider only the errors occurring on the last controlled gate on the data qubit.
Then, considering that each qubit has only one faulty location capable of introducing an error, we have
\begin{align}
\label{eq:LB_PRGivenE1}
    P_\mathrm{res} \ge 1 - \sum_{w=0}^n \frac{A_w}{4^k} \, p_\mathrm{FT}^w \, (1-p_\mathrm{FT})^{n-w}\,
\end{align}
where $k$ is the number of information qubits, $A_w$ is the $w$-th coefficient of $A(z)$, and $p_\mathrm{FT}$ is the probability that an error occurs at a faulty location.
As in \eqref{eq:PLwithAz}, here $A(z)$ is used to take into account that residual errors could generate an element of the stabilizer group.

We can refine the bound in \eqref{eq:LB_PRGivenE1} by adding others residual error sources.
In order to do this, we have to consider also the controlled-gate before the last ones.
Without knowledge of the generator order, the best-case scenario arises when CZ and CNOT gates alternate on each qubit. 
This is because, under this arrangement, any error occurring during the third-to-last controlled operation will be detected by at least one of the subsequent two generators acting on the same qubit. 
Note that this reasoning assumes a single fault, which is a conservative assumption and therefore does not affect the validity of the lower bound.
In case of any other ordering we can provide a tighter bound.

\begin{theorem}
\label{th:PresLB}
We can lower bound $P_\mathrm{res}$ as
\begin{align}
\label{eq:PresLB}
    P_\mathrm{res} \ge 1 - \sum_{w=0}^n \frac{A_w}{4^k} \, q^w(p_\mathrm{FT}) \, (1-q(p_\mathrm{FT}))^{n-w}\,
\end{align}
where
\begin{align}
    q(p_\mathrm{FT}) &= \frac{1}{2} + \frac{p_\mathrm{FT}}{3} - \frac{(3-4p_\mathrm{FT})^{v_\mathrm{m}+1}}{6\,(3-2p_\mathrm{FT})^{v_\mathrm{m}}}
\end{align}
and $v_\mathrm{m}$ is the minimum, over qubits, number of consecutive identical controlled gates at the end of the syndrome extraction.
\end{theorem}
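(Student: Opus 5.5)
The plan is to refine the argument behind \eqref{eq:LB_PRGivenE1}. We keep only a subset of faulty locations: for every data qubit $j$, the controlled gates in the final block of consecutive identical gates acting on it, plus the gate immediately before that block. All other locations are taken to be fault-free. Discarding fault sources can only decrease the probability of a residual error, so this gives a lower bound.

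Under this restriction, the faults on different data qubits are independent, because each location touches a single data qubit. Suppose each qubit independently ends up with an undetected nontrivial Pauli error with probability $q(p_\mathrm{FT})$. Then the residual data error is an i.i.d. depolarizing-type pattern of per-qubit rate $q$. Its harmless part, the stabilizer elements, is counted by $A(z)$ exactly as in \eqref{eq:PLwithAz} and \eqref{eq:LB_PRGivenE1}. This yields \eqref{eq:PresLB}, once we justify treating the three Pauli types symmetrically, as in \eqref{eq:LB_PRGivenE1}. A nontrivial error that stays outside the stabilizer group is a residual error by definition. The whole proof therefore reduces to computing $q$ for a single qubit. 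Since $q$ decreases with the block length $v$, we take the qubit with the smallest block, $v_\mathrm{m}$; this is the worst case in the bound direction we need.

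\textbf{Single-qubit computation.} Fix a qubit, and say its final block consists of $v$ CZ-type gates (CNOT-type is symmetric under $\M{X}\leftrightarrow\M{Z}$). The block is preceded by one gate of the opposite type, as in the best-case alternating ordering discussed before the theorem. The steps are as follows.
\begin{itemize}
\item Each gate applies $\M{X},\M{Y},\M{Z}$ to the data qubit with probability $p_\mathrm{FT}/3$ each.
\item A $\M{Z}$ fault in the final block commutes with all later generators on that qubit and is never flagged.
\item An $\M{X}$ or $\M{Y}$ fault is flagged by any later gate of the block that measures the anticommuting generator. It escapes only if it arises at the very last gate.
\item Faults from the preceding opposite-type gate can be detected only by the block, and hence escape only in their commuting component.
\end{itemize}
Since the last gate always escapes, we track the Pauli frame backwards as a Markov chain on $\{\M{I},\M{X},\M{Y},\M{Z}\}$ modulo the detection rule. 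We condition on the event that no later gate adds a detectable component, which has probability $1-2p_\mathrm{FT}/3$ per gate. This conditioning is what produces the factor $(3-2p_\mathrm{FT})^{-v}$.

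Summing the resulting geometric series over the $v+1$ tracked gates should give a probability of a nontrivial surviving frame of the form $\tfrac12+\tfrac{p_\mathrm{FT}}{3}-\tfrac12\,(1-4p_\mathrm{FT}/3)^{v+1}/(1-2p_\mathrm{FT}/3)^{v}$. Here $(1-4p_\mathrm{FT}/3)$ is the eigenvalue of the depolarizing composition, as in $1-\tfrac34\bigl(1-(1-4p/3)^m\bigr)$. Writing $(1-4p_\mathrm{FT}/3)^{v+1}/(1-2p_\mathrm{FT}/3)^{v}$ over a common denominator gives exactly $q(p_\mathrm{FT})$. As a sanity check, at $v=0$ one should recover a single-gate contribution, consistent with \eqref{eq:LB_PRGivenE1}.

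\textbf{Main obstacle.} The delicate step is setting up the conditioning and recursion correctly. One must decide which fault combinations in the block cancel and which are flagged, and verify that the resulting per-qubit marginal really is $q$ with the claimed constants. The first attempt will be a two-state reduction: ``commuting component present or not'', with the anticommuting component killing the trajectory. This turns the recursion into a scalar linear recurrence $x_{i+1}=a x_i+b$, whose solution directly yields the ratio of powers in $q$.
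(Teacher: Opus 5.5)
Your plan follows the paper's proof. You keep only the data-qubit locations of the final block and the gate just before it, and you let only commuting-type faults survive, except at the last gate. You condition on the absence of detectable faults, so each of the $v$ non-final gates (the preceding opposite-type gate and the first $v-1$ gates of the block) toggles the $\mathbf{Z}$ frame with probability $\epsilon = p_\mathrm{FT}/(3-2p_\mathrm{FT})$. You combine the resulting odd-parity probability with the unrestricted last gate, take the worst qubit, and feed the result into the enumerator expression. The computation you leave as ``should give'' follows directly from your own two-state recurrence. From $x_{i+1} = \epsilon + (1-2\epsilon)x_i$ with $x_0 = 0$ you get $r = \tfrac12 - \tfrac12\bigl(\tfrac{3-4p_\mathrm{FT}}{3-2p_\mathrm{FT}}\bigr)^{v}$, which the paper quotes as the LDPC odd-parity formula. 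Then $q = r(1-p_\mathrm{FT}/3) + (1-r)p_\mathrm{FT} = p_\mathrm{FT} + r(1-4p_\mathrm{FT}/3)$, which simplifies exactly to the stated $q(p_\mathrm{FT})$.

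Two of your justifications need fixing. First, $q$ does not decrease with the block length; it increases. We have $q = p_\mathrm{FT}$ at $v=0$, $q \approx 4p_\mathrm{FT}/3$ at $v=1$, and $q \to \tfrac12 + \tfrac{p_\mathrm{FT}}{3}$ as $v \to \infty$, because $(3-4p_\mathrm{FT})/(3-2p_\mathrm{FT}) < 1$. This increasing monotonicity is what makes the smallest block $v_\mathrm{m}$ give $\min_j q_j$, and hence the conservative choice. Under the monotonicity you state, the smallest block would be the wrong end. Second, the enumerator step does not rest on Pauli symmetry, and no symmetry is available: $\mathbf{X}$ and $\mathbf{Y}$ survive only from the last gate, while the commuting component accumulates over the block. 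The step is valid because a fixed nontrivial Pauli on a qubit has probability at most that qubit's total rate. Hence $\sum_w \tfrac{A_w}{4^k} q^w (1-q)^{n-w}$ over-estimates the probability of landing in the stabilizer group. Three further steps are heuristic modelling choices valid for small $p_\mathrm{FT}$, which you share with the paper, not facts that either argument proves: replacing the per-qubit $q_j$ by their minimum inside that expression, your claim that discarding fault locations can only lower $P_\mathrm{res}$ (an extra fault can cancel or expose a residual error), and the conditioning on undetected faults.
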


\begin{proof}
    See Appendix~\ref{app:PresLB}.
\end{proof}

Note that, if we wish to remain agnostic to the ordering of generator measurements, we conservatively take the best-case value $v_\mathrm{m} = 1$, as we are deriving a lower bound. For example, if all $\mathbf{X}$ generators are measured first, followed by all $\mathbf{Z}$ generators, then $v_\mathrm{m} = \min_{j=1,\dots,n} \{ v_j^{(\mathrm{z})} \}$.

\subsection{Upper Bound}

Deriving an upper bound is nontrivial, as any faulty location in the syndrome extraction circuit can potentially be the endpoint of an error path, resulting in a residual error. 
This implies that one could always construct an error pattern that maps to each fault location, leading to a trivial upper bound that simply counts all faulty locations. 
To avoid this, we focus on deriving an upper bound in the low-error regime ($p_\mathrm{FT} \ll 1$). 
In this regime, it suffices to consider only single-error patterns, which makes the analysis tractable and provides a non-trivial bound.

\begin{theorem}
\label{th:PresUB_Flag}
We can bound $P_\mathrm{res}$, for $p_\mathrm{FT} \ll 1$, as 
\begin{align}
\label{eq:PresUB_Flag}
    P_\mathrm{res} &\le n\,p_\mathrm{FT} + \left(D_\mathrm{res}+A_\mathrm{res}+\frac{3}{2}F_\mathrm{res}\right) \frac{p_\mathrm{FT}}{3}
\end{align}
where $D_\mathrm{res}$ is the number of faulty locations that can cause a residual error from the data qubits (excluding the last gate of each qubit), $A_\mathrm{res}$ is the number of faulty locations that can cause a residual error from the ancillary qubit, and $F_\mathrm{res}$ is the number of faulty locations that can cause a residual error from the flag qubits.
Specifically, we have that
\begin{equation}
\label{eq:PresUB_Flag_Param}
\begin{aligned}
    D_\mathrm{res} &= \max\left\{\sum_{j=0}^{n} v_j^{(\mathrm{z})}, \sum_{j=0}^{n} v_j^{(\mathrm{x})}\right\}\\
    A_\mathrm{res} &= \sum_{i\in \mathcal{G}_\mathrm{M}} \big[K\gamma_i-1+2n_\mathrm{flag}(\gamma_i, d)\big] \\
    F_\mathrm{res} &= 4\sum_{i\in \mathcal{G}_\mathrm{M}} n_\mathrm{flag}(\gamma_i, d)
\end{aligned}
\end{equation}
where $K=1$ for flag-based syndrome extraction and $K=2$ for cat-based syndrome extraction in which the cat is constructed using a $t$-\ac{FT} generator measurement.
\end{theorem}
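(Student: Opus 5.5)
In the regime $p_\mathrm{FT}\ll 1$ the plan is a union bound over single faults. Concretely, $P_\mathrm{res}\le \sum_{\ell}\Prob{\text{fault at }\ell \text{ produces a residual error}}+O(p_\mathrm{FT}^2)$. At each faulty location the depolarizing fault is one of the three non-identity Paulis, each with probability $p_\mathrm{FT}/3$. The location contributes $p_\mathrm{FT}$ if all three Paulis are harmful, and $p_\mathrm{FT}/3$ per harmful Pauli otherwise. The work is to sort the locations of one round of syndrome extraction into four classes and count, for each class, the locations and the harmful Paulis that can leave an undetected error on the data.

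\textbf{Data qubits.} The first class is the last controlled gate acting on each data qubit. Nothing measured afterwards can detect an error there, so any of the three Paulis may become residual. This gives the term $n\,p_\mathrm{FT}$, consistent with the lower bound of Theorem~\ref{th:PresLB}.

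The second class is every earlier controlled gate on a data qubit. Suppose the gate belongs to a $\mathbf{Z}$-type generator. A fault there is only missed if it commutes with all later generators touching that qubit. In the worst case the later gates are all of the same type, and then exactly one Pauli escapes detection. I will take the worst case over orderings, because the bound must hold for an unknown generator order.
\begin{itemize}
\item Counting one harmful Pauli per data-side location over the $\mathbf{Z}$ part gives $\sum_j v_j^{(\mathrm z)}$ locations; over the $\mathbf{X}$ part it gives $\sum_j v_j^{(\mathrm x)}$.
\item An undetected error must survive the later measurements of one type, so only one family contributes fully. I take the maximum of the two counts, giving $D_\mathrm{res}\,p_\mathrm{FT}/3$.
\end{itemize}
The last gate per qubit is already counted in the first class. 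The justification for keeping only the larger family will be that the later generators of the opposite type detect the other family's faults.

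\textbf{Ancilla and flag qubits.} The third class is faults on the syndrome ancilla (flag scheme) or on the cat qubits (cat scheme). These propagate through later controlled gates onto the data.
\begin{itemize}
\item \emph{Flag scheme.} For a weight-$\gamma_i$ generator, count the ancilla-side locations whose propagated error is a nontrivial, non-stabilizer error not caught by the flags: the $\gamma_i-1$ positions between data couplings, plus the $2n_\mathrm{flag}$ flag-coupling gates on the ancilla side. Again one Pauli per location is dangerous, since only the error type that propagates through the controlled gates matters.
\item \emph{Cat scheme.} Each of the $\gamma_i$ cat qubits has a data-coupling location, and the cat-preparation circuit adds one further location per cat qubit through the $t$-\ac{FT} generator measurement. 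This gives $2\gamma_i-1+2n_\mathrm{flag}$.
\end{itemize}
Summing over $i\in\mathcal G_\mathrm M$, the family that yields the worst count, gives $A_\mathrm{res}$.

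The fourth class is the flag qubits. Each flag has four locations: preparation, the two CNOTs, and measurement. A single fault there can give a wrong flag pattern, and the flag decoder then applies a spurious hook correction. I will argue that for each such location at most half of the $3$ Paulis, bounded by $3/2$ per location, lead to a residual error. This yields $\frac32 F_\mathrm{res}\,p_\mathrm{FT}/3$ with $F_\mathrm{res}=4\sum_i n_\mathrm{flag}$.

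\textbf{Main obstacle.} The hardest step is the ancilla/flag accounting: justifying the per-location multiplicities ($1$ for data and ancilla locations, $3/2$ for flag locations) and restricting to $\mathcal G_\mathrm M$ instead of all generators, while staying agnostic to the circuit details. My first approach is a commutation argument. Errors of the type detected by the other family's later measurements get caught, so only one family's locations can contribute. I would then check the multiplicities on the weight-four gadgets of Fig.~\ref{fig:cat_flag}.
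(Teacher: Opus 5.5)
Your architecture is the paper's: a first-order union bound over single faults, the worst-case ordering (all $\mathbf{X}$-type generators measured before all $\mathbf{Z}$-type ones), the four classes of locations, and ancilla and flag contributions restricted to the last-measured family $\mathcal{G}_\mathrm{M}$. However, two of your counting steps do not hold as stated.

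\textbf{Earlier data-qubit gates.} In the worst-case ordering, the last $\mathbf{X}$-type gate on qubit $j$ is followed only by $\mathbf{Z}$-type checks. A $\mathbf{Z}$ fault there therefore goes undetected, so your justification that ``the later generators of the opposite type detect the other family's faults'' fails exactly at this gate. Per qubit, the paper's $D_\mathrm{res}=\sum_j v_j^{(\mathrm{z})}$ is the $v_j^{(\mathrm{z})}-1$ non-final $\mathbf{Z}$-type gates plus this last $\mathbf{X}$-type gate. You reach the same number only by counting all $v_j^{(\mathrm{z})}$ $\mathbf{Z}$-type gates, including the final one, which you also (correctly) assign to the $n\,p_\mathrm{FT}$ term. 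Your event list thus omits $n$ harmful events and double-counts $n$ others. The coefficient comes out right by this cancellation, not by the argument you give.

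\textbf{Flag locations.} Your intended lemma, at most $3/2$ harmful Paulis at \emph{each} flag location, is false. For an integer count it would also mean at most one per location, which gives only $F_\mathrm{res}\,p_\mathrm{FT}/3$. At the flag readout, both $\mathbf{X}$ and $\mathbf{Y}$ flip the measured flag. In the paper the $\frac{3}{2}$ is an average over the four flag-side locations:
\begin{itemize}
\item on the first two, only $\mathbf{X}$ is counted, since the paper argues that $\mathbf{Z}$ and $\mathbf{Y}$ there propagate onto the syndrome ancilla and change the syndrome;
\item on the last two, both $\mathbf{X}$ and $\mathbf{Y}$ are counted.
\end{itemize}
That gives $1+1+2+2=6$ per flag, i.e.\ $F_\mathrm{res}\,p_\mathrm{FT}/6+F_\mathrm{res}\,p_\mathrm{FT}/3$.

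\textbf{Ancilla multiplicity.} The same mechanism fixes the ancilla count, and your stated reason there is off. $\mathbf{Y}$ on the ancilla propagates to the data through the controlled gates just as $\mathbf{X}$ does. It is discarded because its $\mathbf{Z}$ component, acting between the two Hadamards, flips the ancilla readout, so the fault is detected. Your count of $\gamma_i-1+2n_\mathrm{flag}(\gamma_i,d)$ locations per generator nonetheless coincides with the paper's.
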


\begin{proof}
    See Appendix~\ref{app:PresUB_Flag}.
\end{proof}

Note that \eqref{eq:PresUB_Flag} provides an upper bound and not an equality for $p_\mathrm{FT} \ll 1$, since it neglects the possibility that the flag-based correction logic may successfully correct some of the considered error patterns.


\section{Derivation of Analytical Parameters}
\label{sec:Parameters}
In this section, we derive the parameters needed in our analysis for some quantum codes and \ac{FT} syndrome extraction schemes. 
In particular we explicitly provide: $i)$ the total number of faulty locations for \ac{FT} syndrome extraction, $N_\mathrm{FL}(d)$; $ii)$ the number of flag qubits required to make an ancilla qubit with $\gamma_i$ controls \ac{FT} up to distance $d$, $n_\mathrm{flag}(\gamma_i, d)$; $iii)$ the generator weights, $\gamma_i$; $iv)$ the number of $\mathbf{Z}$ ($\mathbf{X}$) operators in the generators that act on the $j$-th qubit, $v_j^{(\mathrm{z})}$ ($v_j^{(\mathrm{x})}$).

In the following, we assume that syndrome extraction for $\mathbf{Z}$ generators is performed with CNOTs, while syndrome extraction for $\mathbf{X}$ generators is performed with CZs. 
In this way, each flag is composed by two CNOT and two measurements (i.e., one for initialization and one to read the flag).

\subsection{Faulty Locations in Flag-based Syndrome Extraction}
Let us define $n_{\mathrm{FL},i}$ as the number of faulty locations in the gadget used to extract the $i$-th syndrome. 
In this way, we have that 
$N_\mathrm{FL}(d) = \sum_{i=1}^{n-k} n_{\mathrm{FL},i}(d)$. 

For a flag-based syndrome extraction, the total number of faulty locations involved in measuring the $i$-th stabilizer generator is given by
\begin{align}
\label{eq:nFPi}
    n_{\mathrm{FL},i}(d) = 2\gamma_i + 4 + 6\, n_\mathrm{flag}(\gamma_i, d)\,.
\end{align}
In \eqref{eq:nFPi}, the term $2\gamma_i$ accounts for the two-qubit gates (e.g., CNOTs or CZs) used to entangle the ancilla with the data qubits according to the $i$-th generator. 
The term $4$ represents the measurement needed both for the initialization of the zero qubit and the final measurement of the ancilla qubit, and the two Hadamard gates. 
The term $6\, n_\mathrm{flag}(\gamma_i, d)$ accounts for the six faulty locations (including both two-qubit gates and measurements) per flag qubit introduced to ensure fault-tolerance in the syndrome extraction.

In the following sections we provide some examples for the value $n_\mathrm{flag}$.

\subsubsection{Flag-based scheme for low-weight generators}
By exhaustive search, we can find the minimum amount of flag required to protect a generator extraction for low $\gamma_i$.
From this analysis, we found that
\begin{align}
\label{eq:nflagOpt}
    n_\mathrm{flag}(\gamma_i, d) &= 
    \begin{dcases}
        0 & \gamma_i \le 3\\
        2 & 3 < \gamma_i \le 5\,.
    \end{dcases}
\end{align}

\subsubsection{Flag-based Chao and Reichardt scheme~\cite{ChaRei20:FlagMethodAnyD}}

In \cite{ChaRei20:FlagMethodAnyD} the authors proposed two flag error-correction schemes for any \ac{CSS} code and any code distance $d$.
In the first scheme, per each controlled operation they use $d-1$ flags, resulting in a total of $n_\mathrm{flag}(\gamma_i, d) = \gamma_i(d-1)$.
In the second scheme, they propose an optimization of the first scheme in which the number of flags per controlled operation follow the pattern $0, 1, 2, \dots, d-2, d-1, d-1, \dots, d-1, d-2, \dots, 2, 1, 0$, instead of the constant pattern $d-1, \dots, d-1$.
This results in 
\begin{align}
    &n_\mathrm{flag}(\gamma_i, d) \notag \\ 
    &= 
    \begin{dcases}
        \binom{\lceil\frac{\gamma_i}{2}\rceil}{2} + \binom{\lfloor\frac{\gamma_i}{2}\rfloor}{2} &\left\lfloor\frac{\gamma_i}{2}\right\rfloor < d\\
        2\binom{d}{2} + (d-1)(\gamma_i-2d) &\text{otherwise}
    \end{dcases}
\end{align}

\subsubsection{Flag-based Prabhu and Reichardt scheme~\cite{Pra2023:flagFT}} In \cite{Pra2023:flagFT} the authors propose flag error-correction schemes for distance $d = 5$ and $d = 7$ codes. 
Specifically, we have
\begin{align}
 n_\mathrm{flag}(\gamma_i, 5) 
    &= 
    \begin{dcases}
        6 & \gamma_i \leq 8, \gamma_i \,\, \text{even} \\
        \gamma_i / 2 + 1 &\gamma_i > 8, \gamma_i \,\, \text{even}
    \end{dcases}
\end{align}
and
\begin{align}
 n_\mathrm{flag}(\gamma_i, 7) = \gamma_i + 1
\end{align}

\subsection{Faulty Locations in Cat state-based Syndrome Extraction}

For a cat state-based syndrome extraction, the total number of faulty locations involved in measuring the $i$-th stabilizer generator is given by
\begin{align}
\label{eq:nFPiCat}
    n_{\mathrm{FL},i}(d) = 4\gamma_i + f_{\text{cat}}(\gamma_i, d)
\end{align}
where $\gamma_i$ is the weight of the $i$-th generator, and $f_{\text{cat}}(\gamma_i, d)$ denotes the number of faulty locations involved in generating a cat state with size $\gamma_i$, achieving fault tolerance up to distance $d$.
In \eqref{eq:nFPiCat}, the term $4\gamma_i$ accounts for the two-qubit gates, in addition to the final $\gamma_i$ Hadamard gates and measurements performed on each qubit of the cat state.

\subsubsection{Cat state preparation based on generator measurement}

It is possible to prepare a cat state by measuring its $\M{X}$ generator, i.e. a Pauli $\M{X}$ on every qubit of the final state. 
In order to make this state preparation \ac{FT}, it is sufficient to apply one of the previous flag schemes to the measurement of this operator.
In this case, we obtain
\begin{align}
\label{eq:fCAT}
    f_{\text{cat}}(\gamma_i, d) = 3\gamma_i + 4 + 6\, n_\mathrm{flag}(\gamma_i, d)\,,
\end{align}
where $3\gamma_i$ accounts for the two-qubit gates and initializations, and the term $4$ corresponds to the initialization, measurement, and Hadamard gates applied to the qubit performing the generator measurement to build the cat state.
The term $6\, n_\mathrm{flag}(\gamma_i, d)$ accounts for the six faulty locations (including both two-qubit gates and measurements) per flag qubit introduced to ensure fault tolerance in the cat state preparation.

\subsection{Derivation of $\gamma_i$, $v_j^{(\mathrm{z})}$, and $v_j^{(\mathrm{x})}$ for some quantum codes}

For all the following codes, we have that $|\mathcal{G}_x| = |\mathcal{G}_z| = (n-k)/2$.

\subsubsection{Surface code}
For a symmetric $[[2d^2-2d+1, 1, d]]$ surface code~\cite{DenKitLan:02, FowMarMar:12, Rof:19}, the weight of the $i$-th generator is\footnote{Hereafter, we enumerate the generators in order of increasing weight.} 
\begin{align}
    \gamma_i = \begin{dcases}
        3 & 1 \le i \le 4(d-1) \\
        4 & 4(d-1) < i \le 2d(d-1)
    \end{dcases}
\end{align}
Regarding $v_j^{(\mathrm{z})}$ we have that
\begin{align}
    v_j^{(\mathrm{z})} = \begin{dcases}
        1 & 1\le j \le 2d\\
        2 & 2d < j \le 2d^2-2d+1\\
    \end{dcases}
\end{align}
Since we are considering symmetric surface codes, $v_j^{(\mathrm{z})} = v_{\ell}^{(\mathrm{x})}$ for some permutation matrix with ones in position $(j, \ell)$.

\subsubsection{Rotated surface code}
For a symmetric $[[d^2, 1, d]]$ rotated surface code~\cite{HorFowDev:12, ForValChi24:JSAC}, the weight of the $i$-th generator is
\begin{align}
    \gamma_i = 
    \begin{dcases}
        2 & 1 \le i \le 4(d-2) \\
        4 & 4(d-2) < i \le d^2-1
    \end{dcases}
\end{align}
Regarding $v_j^{(\mathrm{z})}$ we have that
\begin{align}
    v_j^{(\mathrm{z})} = \begin{dcases}
        1 & 1\le j \le 2d\\
        2 & 2d < j \le d^2\\
    \end{dcases}
\end{align}
Since we are considering symmetric rotated surface codes, $v_j^{(\mathrm{z})} = v_{\ell}^{(\mathrm{x})}$ for some permutation matrix with ones in position $(j, \ell)$.

\subsubsection{M\"obius code}
For a symmetric $[[2d^2-d, 1, d]]$ M\"obius code~\cite{ValForChi25:CylMob}, the weight of the $i$-th generator is
\begin{align}
    \gamma_i = 
    \begin{dcases}
        3 & 1 \le i \le 2d \\
        4 & 2d < i \le 2d^2-d- 1\,.
    \end{dcases}
\end{align}
Regarding $v_j^{(\mathrm{z})}$ we have that
\begin{align}
    v_j^{(\mathrm{z})} = \begin{dcases}
        1 & 1\le j \le 2d\\
        2 & 2d < j \le 2d^2-d\\
    \end{dcases}
\end{align}
Since we are considering symmetric M\"obius codes, $v_j^{(\mathrm{z})} = v_{\ell}^{(\mathrm{x})}$ for some permutation matrix with ones in position $(j, \ell)$.

Note that the same $\gamma_i$, $v_j^{(\mathrm{z})}$, and $v_j^{(\mathrm{x})}$ also hold for the $[[2d^2 - d, 1, d]]$ cylindrical code, as the Möbius code is derived from the cylindrical code by introducing a non-trivial fiber bundle twist, effectively replacing the trivial bundle structure with a Möbius one~\cite{ValForChi25:CylMob}.

\subsubsection{Honeycomb color code}
For a $[[(3d^2+1)/4, 1, d]]$ honeycomb color code with odd $d$~\cite{Bom06:colorCodes}, the weight of the $i$-th generator is
\begin{align}
    \gamma_i = 
    \begin{dcases}
        4 & 1 \le i \le 3(d-1) \\
        6 & 3(d-1) < i \le 3(d^2 - 1)/4
    \end{dcases}
\end{align}
Regarding $v_j^{(\mathrm{z})}$ we have that
\begin{align}
    v_j^{(\mathrm{z})} = \begin{dcases}
        1 & 1 \le j \le 3\\
        2 & 3 < j \le 3(d-1)\\
        3 & 3(d-1) < j \le (3d^2+1)/4\\
    \end{dcases}
\end{align}
Since the honeycomb color code is self-dual, we have that $v_j^{(\mathrm{z})} = v_{j}^{(\mathrm{x})}$.

\subsubsection{Square-octagon color code}
For a $[[(d^2-1)/2 + d, 1, d]]$ square-octagon color code with odd $d$~\cite{Bom06:colorCodes}, the weight of the $i$-th generator is
\begin{align}
    \gamma_i = 
    \begin{dcases}
        4 & 1 \le i \le (d-1)(d+5)/4\\
        8 & (d-1)(d+5)/4 < i \le (d^2 - 1)/2 + d - 1
    \end{dcases}
\end{align}
Regarding $v_j^{(\mathrm{z})}$ we have that
\begin{align}
    v_j^{(\mathrm{z})} = \begin{dcases}
        1 & 1 \le j \le 3\\
        2 & 3 < j \le 3(d-1)\\
        3 & 3(d-1) < j \le (d^2-1)/2 + d\\
    \end{dcases}
\end{align}
Since the square-octagon color code is self-dual, we have that $v_j^{(\mathrm{z})} = v_{j}^{(\mathrm{x})}$.

\subsubsection{Gross code}
For the $[[144, 12, 12]]$ Gross code, which is a bivariate bicycle quantum low-density parity-check code, all stabilizer generators have the same weight, namely
\begin{align}
    \gamma_i = 6
\end{align}
for all $i$. Moreover, each physical qubit participates in three $\mathbf{Z}$-type generators and three $\mathbf{X}$-type generators, so that
\begin{align}
    v_j^{(\mathrm{z})} = v_j^{(\mathrm{x})} = 3
\end{align}
for all $j$.

\section{Numerical Results}
\label{sec:NumRes}

In this section, we present the derived performance bounds and validate them through Monte Carlo simulations. The simulator was implemented in C++ and include:
$i)$ the \ac{FT}-\ac{QEC} circuit for syndrome extraction, accounting for faulty gates and measurements;
$ii)$ the decoder for the adopted \ac{QEC} code; and
$iii)$ the decoder for the flag qubits.

The \ac{FT} syndrome extraction is performed iteratively until $t+1$ consecutive identical syndromes are observed. 
During each generator extraction step, the flag-based decoder is applied to prevent error propagation and to ensure that the \ac{FT} gadget condition is satisfied. 
Once the extraction process halts, the final syndrome is passed to the main \ac{QEC} decoder, which attempts to identify and correct the underlying error.
To assess decoder failure, we verify whether the decoder successfully corrects the errors present at the input of the final syndrome extraction circuit. 
If the applied correction also eliminates the error at the output of the last syndrome extraction, no residual error is recorded. 
Otherwise, a residual error is registered in the simulation.

In the following, we evaluate the performance of both the Steane code and the surface code. 
For surface code simulations, we employ the \ac{MWPM} decoder, while for the Steane code, a \ac{MW} decoder is used.
For \ac{FT} syndrome extraction, we adopt the optimized flag-based circuit shown in Fig.~\ref{fig:cat_flag}a, with the number of flag qubits $n_\mathrm{flag}$ described by \eqref{eq:nflagOpt}.
It is worth noting that both the Steane and surface codes require a \ac{FT} extraction of weight-four generators, as weight-three generators are intrinsically \ac{FT}.
Regarding the error model, a depolarizing channel with parameter $p_\mathrm{FT}$ is applied at every faulty location, while an initial depolarizing channel with parameter $p$ is applied to the codeword at the start of the simulation.

\begin{figure}[t]
	\centering
	\resizebox{\columnwidth}{!}{ 
%
%
\definecolor{mycolor1}{rgb}{0.92157,0.58039,0.52549}%
\definecolor{mycolor2}{rgb}{0.85000,0.32500,0.09800}%
\definecolor{mycolor3}{rgb}{0.92900,0.69400,0.12500}%
\definecolor{mycolor4}{rgb}{0.49400,0.18400,0.55600}%
\begin{tikzpicture}

\begin{axis}[%
name = plot,
width=4.5in,
height=3.5in,
at={(0in,0in)},
scale only axis,
xmode=log,
xmin=0.001,
xmax=0.2,
xminorticks=true,
xlabel style={font=\color{white!15!black}, font = \Large},
xlabel={Channel Error Probability $p$},
tick label style={black, semithick, font=\Large},
ymode=log,
ymin=1e-05,
ymax=1,
yminorticks=true,
ylabel style={font=\color{white!15!black}, font = \Large},
ylabel={$P_\mathrm{fail,dec}$},
axis background/.style={fill=white},
title style={font=\bfseries},
legend style={at={(0.03,0.97)}, anchor=north west, legend cell align=left, align=left, draw=white!15!black}
]
\addplot [color=goodBlue, line width=1.5pt, only marks, draw=none, mark=o, mark size = 3pt, mark options={solid, goodBlue}]
  table[row sep=crcr]{%
0.001	2.14717e-05\\
0.002	8.13187e-05\\
0.005	0.00045092\\
0.01	0.00206817\\
0.02	0.00939585\\
0.05	0.0397141\\
0.1	0.15748\\
0.2	0.393701\\
};
\addlegendentry{Simulation}

\addplot [color=goodRed, line width=1.5pt, only marks, draw=none, mark=square, mark size = 3pt, mark options={solid, goodRed}]
  table[row sep=crcr]{%
0.001	1.565e-05\\
0.002	5.88345e-05\\
0.005	0.000418605\\
0.01	0.00162807\\
0.02	0.00607595\\
0.05	0.034618\\
0.1	    0.112782\\
0.2	    0.307062\\
};
\addlegendentry{Simulation w/ Ideal Gates}

\addplot [color=black, dashed, line width=1.5pt]
  table[row sep=crcr]{%
0.001	5.32227918111916e-05\\
0.002	0.000211501249621771\\
0.005	0.00129619634033729\\
0.01	0.00501842171350464\\
0.02	0.0188126873619985\\
0.05	0.0970577801094188\\
0.1	0.28476809892489\\
0.2	0.637215331730078\\
0.4	0.946580983478709\\
};
\addlegendentry{Upper Bound \eqref{eq:Pfail1}}

\addplot [color=goodBlue, dashed, line width=1.5pt]
  table[row sep=crcr]{%
0.001	2.72303321844358e-05\\
0.002	0.000108633013989556\\
0.005	0.000673502239647017\\
0.01	0.00265705960436535\\
0.02	0.0103253299014846\\
0.05	0.0586491226995957\\
0.1	0.195714315516594\\
0.2	0.521356606920848\\
};
\addlegendentry{Upper Bound \eqref{eq:Pfail2}}

\addplot [color=goodRed, densely dotted, line width=1.5pt]
  table[row sep=crcr]{%
0.001	1.62867249626382e-05\\
0.002	6.49609321393961e-05\\
0.005	0.000402536341848863\\
0.01	0.00158724627832006\\
0.02	0.00616921461162673\\
0.05	0.0353530979166668\\
0.1	    0.1221382\\
0.2	0.362116266666666\\
0.4	0.7833088\\
};
\addlegendentry{Upper Bound \eqref{eq:PLBD} with \eqref{eq:PFGivenE_BDwithBeta} - $p_{FT} = 0$}

\end{axis}
\end{tikzpicture}%
	} 
	\caption{$[[7, 1, 3]]$ Steane code with $p_\mathrm{FT} = p/100$.}
    \label{fig:SteanePerf_p100}
\end{figure}

In Fig.~\ref{fig:SteanePerf_p100} and Fig.~\ref{fig:SurfacePerf_p10}, we present the simulation results and performance bounds for the $[[7,1,3]]$ Steane code with $p_\mathrm{FT} = p/100$ and the $[[13,1,3]]$ surface code with $p_\mathrm{FT} = p/10$. 
Specifically, we report results from simulations using both faulty and ideal quantum operations.
We also include the upper bound of the \ac{QEC} code, as given by~\eqref{eq:PLBD} with~\eqref{eq:PFGivenE_BDwithBeta}~\cite{ForValChi25:MacW,ForValChi24:JSAC}. 
For the \ac{FT}-\ac{QEC} performance limits, we provide both the simple bound in~\eqref{eq:Pfail1} and its enhanced version in~\eqref{eq:Pfail2}.
The results indicate that the refined bound~\eqref{eq:Pfail2} is tight for small values of $p_\mathrm{FT}$, though it becomes loose as $p_\mathrm{FT}$ increases. 
Since technological advancements are expected to reduce $p_\mathrm{FT}$, these bounds are anticipated to become tighter in realistic scenarios. 
Nevertheless, even when not tight, the upper bound remains useful in providing performance guarantees, ensuring that a given scheme can achieve at least a certain level of performance.
Despite its looser characterization, the bound in~\eqref{eq:Pfail1} is still valuable due to its dependence solely on the total number of faulty locations, $N_\mathrm{FL}$. 
For this reason, it is also reported.

\begin{figure}[t]
	\centering
	\resizebox{\columnwidth}{!}{ 
%
%
\definecolor{mycolor1}{rgb}{0.92157,0.58039,0.52549}%
\definecolor{mycolor2}{rgb}{0.85000,0.32500,0.09800}%
\definecolor{mycolor3}{rgb}{0.92900,0.69400,0.12500}%
\definecolor{mycolor4}{rgb}{0.49400,0.18400,0.55600}%
\begin{tikzpicture}

\begin{axis}[%
name = plot,
width=4.5in,
height=3.5in,
at={(0in,0in)},
scale only axis,
xmode=log,
xmin=0.001,
xmax=0.2,
xminorticks=true,
xlabel style={font=\color{white!15!black}, font = \Large},
xlabel={Channel Error Probability $p$},
tick label style={black, semithick, font=\Large},
ymode=log,
ymin=1e-05,
ymax=1,
yminorticks=true,
ylabel style={font=\color{white!15!black}, font = \Large},
ylabel={$P_\mathrm{fail,dec}$},
axis background/.style={fill=white},
title style={font=\bfseries},
legend style={at={(0.97,0.03)}, anchor=south east, legend cell align=left, align=left, draw=white!15!black}
]
\addplot [color=goodBlue, line width=1.5pt, only marks, draw=none, mark=o, mark size = 3pt, mark options={solid, goodBlue}]
  table[row sep=crcr]{%
0.001	8.14571e-05\\
0.002	0.000360112\\
0.005	0.00249417\\
0.01	0.00888573\\
0.02	0.0419375\\
0.05	0.227531\\
0.1	0.571429\\
0.2	0.813008\\
};
\addlegendentry{Simulation}

\addplot [color=goodRed, line width=1.5pt, only marks, draw=none, mark=square, mark size = 3pt, mark options={solid, goodRed}]
  table[row sep=crcr]{%
0.001	2.38469e-05\\
0.002	8.14299e-05\\
0.005	0.000482904\\
0.01	0.00206105\\
0.02	0.00841503\\
0.05	0.0484496\\
0.1	0.149365\\
0.2	0.372439\\
};
\addlegendentry{Simulation w/ Ideal Gates}

\addplot [color=black, dashed, line width=1.5pt]
  table[row sep=crcr]{%
0.001	0.00227452649992088\\
0.002	0.00866001273809702\\
0.005	0.0467998172949131\\
0.01	0.148197035541238\\
0.02	0.383651519495739\\
0.05	0.821600133321325\\
0.1	0.98069534766083\\
0.2	0.999808087590132\\
};
\addlegendentry{Upper Bound \eqref{eq:Pfail1}}

\addplot [color=goodBlue, dashed, line width=1.5pt]
  table[row sep=crcr]{%
0.001	0.000244144478034891\\
0.00142112270763801	0.000536320928245959\\
0.00201958975016438	0.00119865359950311\\
0.00287008485407157	0.00271938560667151\\
0.00407874275896902	0.00622693911156402\\
0.00579639395338497	0.0142501961406782\\
0.00823738706957101	0.0321251018165061\\
0.0117063378161711	0.069988851246986\\
0.0166361424938422	0.143898740089091\\
0.0236419998655007	0.271626599270101\\
0.0335981828628378	0.457441283621461\\
0.0477471406017529	0.671375725343787\\
0.0678545457339358	0.852609022271593\\
0.0964296357589577	0.95644443326499\\
0.137038345066317	0.992753731486066\\
0.194748303990876	0.999465167259568\\
0.276761237075423	0.999988189656288\\
0.393311678601869	0.999999961716939\\
0.558944157640337	0.999999999995713\\
0.794328234724281	1\\
};
\addlegendentry{Upper Bound \eqref{eq:Pfail2}}

\addplot [color=goodRed, densely dotted, line width=1.5pt]
  table[row sep=crcr]{%
0.001	1.87989692221665e-05\\
0.002	7.55029411616956e-05\\
0.005	0.000477327183478119\\
0.01	0.00194137850668971\\
0.02	0.007964320779869\\
0.05	0.0511276860644098\\
0.1	0.192628098379648\\
0.2	0.562669250052096\\
};
\addlegendentry{Upper Bound \eqref{eq:PLBD} with \eqref{eq:PFGivenE_BDwithBeta} - $p_{FT} = 0$}

\end{axis}
\end{tikzpicture}%
	} 
	\caption{$[[13, 1, 3]]$ Surface code with $p_\mathrm{FT} = p/10$.}
    \label{fig:SurfacePerf_p10}
\end{figure}

In Fig.~\ref{fig:SurfaceRes_p100}, we present simulations and analytical bounds to illustrate the impact of residual errors on overall performance. Specifically, we report the residual error probability $P_\mathrm{res}$, the decoding error probability $P_\mathrm{fail,dec}$, and the overall failure probability $P_\mathrm{fail}$.
The failure probability accounts for both undetectable and detectable errors. Undetectable errors arise from logical faults caused by decoding failures or residual errors. 
Detectable errors, on the other hand, originate from incorrect syndromes input to the decoder or from residual errors.
These detectable errors can, in principle, be identified and corrected by an ideal round of error correction.
However, in practice, attempting to detect and correct such errors may itself introduce new residual errors due to imperfections in the error correction process.
From the plot, we observe that the asymptotic upper bound~\eqref{eq:PresUB_Flag} on $P_\mathrm{res}$ is very tight within its range of applicability, namely for $p \ll 1$ and $p_\mathrm{FT} \ll 1$. 
In contrast, the lower bound on the residual error~\eqref{eq:PresLB} is less tight, though it still provides a reasonable estimate of the actual curve location.
Finally, by combining these bounds with the decoding failure bound~\eqref{eq:Pfail2}, we obtain an upper bound on the overall failure probability, as expressed in~\eqref{eq:Pfail_Def}.

\begin{figure}[t]
	\centering
	\resizebox{\columnwidth}{!}{ 
%
%
\definecolor{mycolor1}{rgb}{0.92157,0.58039,0.52549}%
\definecolor{mycolor2}{rgb}{0.85000,0.32500,0.09800}%
\definecolor{mycolor3}{rgb}{0.92900,0.69400,0.12500}%
\definecolor{mycolor4}{rgb}{0.49400,0.18400,0.55600}%
\begin{tikzpicture}

\begin{axis}[%
name = plot,
width=4.5in,
height=3.5in,
at={(0in,0in)},
scale only axis,
xmode=log,
xmin=0.001,
xmax=0.2,
xminorticks=true,
xlabel style={font=\color{white!15!black}, font = \Large},
xlabel={Channel Error Probability $p$},
tick label style={black, semithick, font=\Large},
ymode=log,
ymin=1e-05,
ymax=1,
yminorticks=true,
ylabel style={font=\color{white!15!black}, font = \Large},
ylabel={Error Probability},
axis background/.style={fill=white},
title style={font=\bfseries},
legend style={at={(0.03,0.97)}, anchor=north west, legend cell align=left, align=left, draw=white!15!black}
]
\addplot [color=goodGreen, line width=1.5pt, draw=none, only marks, mark=square, mark size = 3pt, mark options={solid, goodGreen}]
  table[row sep=crcr]{%
0.001	0.000348064\\
0.002	0.000701024\\
0.005	0.00174134\\
0.01	0.00306723\\
0.02	0.00607038\\
0.05	0.016129\\
0.1	0.0485981\\
0.2	0.0649351\\
};
\addlegendentry{Simulation $P_\mathrm{res}$}

\addplot [color=goodBlue, line width=1.5pt, only marks, draw=none, mark=o, mark size = 3pt, mark options={solid, goodBlue}]
  table[row sep=crcr]{%
0.001	2.22375e-05\\
0.002	9.70205e-05\\
0.005	0.000467183\\
0.01	0.00197685\\
0.02	0.00831947\\
0.05	0.0485437\\
0.1	0.124844\\
0.2	0.379507\\
};
\addlegendentry{Simulation $P_\mathrm{fail, dec}$}

\addplot [color=goodRed, line width=1.5pt, only marks, draw=none, mark=triangle, mark size = 4pt, mark options={solid, goodRed}]
  table[row sep=crcr]{%
0.001	0.000369528226474211\\
0.002	0.000782285693639651\\
0.005	0.0021914747949672\\
0.01	0.0051290564469309\\
0.02	0.015409193620077\\
0.05	0.0552025512811\\
0.1	0.198424871212\\
0.2	0.4330710861949\\
};
\addlegendentry{Simulation $P_\mathrm{fail}$}

\addplot [color=goodBlue, dashed, line width=1.5pt]
  table[row sep=crcr]{%
0.001	2.55595264635478e-05\\
0.002	0.000103067204192953\\
0.005	0.00065880341764013\\
0.01	0.00272112552150339\\
0.02	0.0114120469966629\\
0.05	0.0748896539009635\\
0.1	0.273205674127728\\
0.2	0.698785352503042\\
};
\addlegendentry{Upper Bound on $P_\mathrm{fail,dec}$~\eqref{eq:Pfail2}}

\addplot [color=goodRed, dashed, line width=1.5pt]
  table[row sep=crcr]{%
0.001	0.00056410430286013\\
0.002	0.0014036655331373\\
0.005	0.00548648659915418\\
0.01	0.0170041123564088\\
0.02	0.05435999982844\\
0.05	0.230661191491344\\
0.1	0.551981213192931\\
0.2	0.929668227173945\\
};
\addlegendentry{Upper Bound on $P_\mathrm{fail}$~\eqref{eq:Pfail_Def}}

\addplot [color=goodGreen, dashed, line width=1.5pt]
  table[row sep=crcr]{%
0.001	0.000423333333333333\\
0.002	0.000846666666666667\\
0.005	0.00211666666666667\\
0.01	0.00423333333333333\\
};
\addlegendentry{Asympt. UB on $P_\mathrm{res}$ \eqref{eq:PresUB_Flag}}

\addplot [color=goodGreen, dashdotdotted, line width=1.5pt]
  table[row sep=crcr]{%
0.001	0.000173319178481203\\
0.002	0.000346610050071849\\
0.005	0.000866312865668099\\
0.01	0.00173191848080032\\
0.02	0.0034610100667621\\
0.05	0.00863136553856527\\
0.1	0.0171924786938582\\
0.2	0.0341060331148166\\
};
\addlegendentry{Lower Bound on $P_\mathrm{res}$ \eqref{eq:PresLB}}

\end{axis}
\end{tikzpicture}%
	} 
	\caption{$[[13, 1, 3]]$ Surface code with $p_\mathrm{FT} = p/100$.}
    \label{fig:SurfaceRes_p100}
\end{figure}


To summarize the source of errors we have investigated, residual errors are errors that remain undetected at the end of a given error-correction cycle but can be detected and mitigated by the next error correction cycle if they do not form a logical operator. They are therefore an inherent feature of QEC-based architectures, where each round may both correct previous errors and introduce new residual ones. 
However, these kind of errors could be sometimes mitigated using post-processing.
For instance, during the final measurement stage of a quantum computation, residual errors could be corrected acting on the measured codeword qubits.
In contrast, decoding failures arise when the decoder applies an incorrect correction, often due to an excessive number of errors, and result in non-recoverable logical errors. The total failure probability accounts for both contributions and corresponds to an overall performance metric (codeword error rate), which is the probability that the output state differs from the input.

The goal of this work was to propose a general framework for the analysis of fault-tolerant Shor-style syndrome extraction circuits, deliberately avoiding commitment to specific protocols in order to remain as broadly applicable as possible.
Nevertheless, a number of recent works illustrate important directions in the design of syndrome extraction schemes.
For example, \cite{LowDepthSyndExt23:BhaSte} develops highly optimized Steane-code circuits using ZX-based rewrites together with a dynamic fallback mechanism; \cite{ParallelSyndExt23:PeiChi} introduces parallel syndrome extraction for distance-three CSS codes using shared flag qubits to reduce circuit depth; and \cite{UltraLowSyndExt25:PooBolKis} proposes adaptive flag-based protocols where the choice of stabilizer measurements depends on past syndromes. 
The first approach is tailored to Steane-style error correction and therefore does not directly fall within the scope of our framework. 
In principle, however, the framework could be adapted to accommodate the latter two approaches, taking into account that they rely on predefined, though potentially adaptive, sequences of stabilizer measurements.

\section{Conclusions}\label{sec:conclusions}
This work presents an analytical framework for evaluating the performance of Shor-style \ac{FT}-\ac{QEC} under circuit-level noise. 
By distinguishing between decoding failures and residual errors, we provide a clearer understanding of the distinct error mechanisms that affect \ac{FT}-\ac{QEC} schemes. 
Unlike prior approaches that often rely on specific circuit or decoder details, our analysis remains applicable and captures performance limitations based solely on structural parameters such as the number of flag qubits and quantum operations.

Our findings underscore the intrinsic vulnerability of \ac{FT}-\ac{QEC} to residual errors, an issue often underrepresented in conventional analyses. 
These residual errors impose fundamental limits on the achievable reliability of quantum error correction, even when the code itself is theoretically sound.
The analytical bounds we derive are further substantiated by Monte Carlo simulations, which closely track the behavior of actual \ac{FT}-\ac{QEC} processes under realistic fault models. Together, these results provide a valuable tool for assessing and comparing \ac{FT}-\ac{QEC} designs at an early stage, before committing to specific circuit architectures or decoding algorithms.

Future work may extend this framework to incorporate circuit-specific optimizations or analyze the trade-offs introduced by different flag decoding strategies, further bridging the gap between theoretical \ac{QEC} models and practical quantum hardware.


\appendices

\section{Proof of \eqref{eq:PSofs}}
\label{app:PSofs}

Let $\rv{L}$ be the random variable representing the number of syndrome extraction attempts that encounter at least one error before observing the first sequence of $t+1$ consecutive error-free extractions.
Each syndrome extraction contains $N_\mathrm{FL}$ faulty locations. The probability that a single extraction is error-free is given by $(1 - p_\mathrm{FT})^{N_\mathrm{FL}}$.
As shown in \cite{Aki94:kConsecutiveSucc}, the probability mass function of $\rv{L}$ follows a geometric-like distribution. 
In our case, it is described by~\eqref{eq:Pellt}.

We compute the conditional distribution $\Prob{S = s \mid L = \ell}$, where $S$ denotes the total number of faults and $L$ the number of failed extractions. 
Conditioned on $L=\ell$, there are $\ell N_{\mathrm{FL}}$ faulty locations across the $\ell$ failed rounds. 
Let $U$ be the set of all fault patterns with exactly $s$ faulty locations, so that $|U| = \binom{\ell N_{\mathrm{FL}}}{s}$. 
For each round $r \in \{1,\dots,\ell\}$, let $A_r \subseteq U$ be the subset of patterns in which round $r$ contains no fault. We are interested in patterns in which every round contains at least one fault, i.e., $U \setminus \bigcup_{r=1}^{\ell} A_r$. By inclusion-exclusion principle~\cite{Feller91:Probability}, we have
\begin{align}
\left|U \setminus \bigcup_{r=1}^{\ell} A_r \right|
=
\sum_{j=0}^{\ell} (-1)^j \binom{\ell}{j} \binom{(\ell-j)N_{\mathrm{FL}}}{s}
\end{align}
since choosing $j$ empty rounds leaves $(\ell-j)N_{\mathrm{FL}}$ available locations for the $s$ faults. Each weight-$s$ pattern occurs with probability $p_{\mathrm{FT}}^{\,s}(1-p_{\mathrm{FT}})^{\ell N_{\mathrm{FL}}-s}$, and the conditioning event that every round contains at least one fault has probability $\bigl[1-(1-p_{\mathrm{FT}})^{N_{\mathrm{FL}}}\bigr]^\ell$. Therefore,
\begin{align}
&\Prob{S=s \mid L=\ell} \notag \\ 
&=
\frac{
p_{\mathrm{FT}}^{\,s}(1-p_{\mathrm{FT}})^{\ell N_{\mathrm{FL}}-s}
\sum_{j=0}^{\ell} (-1)^j \binom{\ell}{j} \binom{(\ell-j)N_{\mathrm{FL}}}{s}
}{
\bigl[1-(1-p_{\mathrm{FT}})^{N_{\mathrm{FL}}}\bigr]^\ell
}\,.
\end{align}
Finally, the marginal distribution is obtained via the law of total probability, 
$\Prob{S = s}
=
\sum_{\ell}
\Prob{S = s \mid L = \ell} \Prob{L = \ell}$, validating~\eqref{eq:PSofs}.

\section{$N$ failures before $k$ consecutive successes in bernoulli trials}
\label{app:failsBefkSucc}
Let $X_1, X_2, \dots$ be an \ac{i.i.d.} Bernoulli sequence with $\mathbb{P}(X_i = 1) = p$ and  $\mathbb{P}(X_i = 0) = q = 1-p$, where $1$ denotes success and $0$ denotes failure.
Define the random variable $\rv{N}$ as the number of failures observed before the first occurrence of $k$ consecutive successes.
Then, $\rv{N}$ follows a geometric distribution with parameter $p^k$ 
\begin{align}
    \Prob{\rv{N} = n} = p^k (1 - p^k)^n\,.
\end{align}

To prove this, let us observe first that before the first run of $k$ consecutive successes occurs, any failure resets
the current count of consecutive successes to zero.
If exactly $n$ failures occur before the first run of $k$ successes, the sequence must have the structure
\begin{align}
    S^{a_0} F \, S^{a_1} F \cdots F \, S^{a_{n-1}} F \, S^k \notag
\end{align}
where $S$ denotes a success, $F$ denotes a failure, and $a_i \in \{0,1,\dots,k-1\}$.
This condition is necessary because before each failure we may have any number
of consecutive successes strictly less than $k$, otherwise the run of $k$
successes would already have occurred.
For a fixed choice of $(a_0,\dots,a_{n-1})$, the probability of the sequence is
\begin{align}
    p^{a_0 + \cdots + a_{n-1} + k} q^n \,. \notag
\end{align}
Summing over all $a_i\in\{0,\dots,k-1\}$ and using the geometric series, we obtain
\begin{align}
    \Prob{\rv{N} = n} = q^n p^k \left( \sum_{a=0}^{k-1} p^a \right)^n = p^k (1-p^k)^n\,.
\end{align}
Another proof of this, using probability generating functions, can be found in~\cite{Aki94:kConsecutiveSucc}.

\section{Proof of Theorem~\ref{th:Pfail2}}
\label{app:Pfail2}

Let $\mathcal{D}_f$ denote the decoding failure event. 
By conditioning on the number of errors $s$ that occur during the syndrome extraction process, represented by the random variable $\rv{S}$, we can split the problem into three cases: $i)$ $\rv{S}\le t$; $ii)$ $\rv{S}= t+1$; $iii)$ $\rv{S}>t+1$.
We observe that for both $ii)$ and $iii)$ we cannot use the \ac{FT} gadget condition.
However, if $\max_i \{\gamma_i/2\} \le t+1$, then the \ac{FT} gadget condition can be used for $ii)$ since the maximum number of errors that can be propagated to the data qubits are $\max_i \{\lfloor\gamma_i/2\rfloor\}$.
To account for this condition we define
\begin{align}
    \tau = \begin{dcases}
        t+1 & \max_i \{\gamma_i/2\} \le t+1 \\
        t & \text{otherwise}\,.        
    \end{dcases}
\end{align}
Then, we upper bound the decoding failure probability $P_\mathrm{fail, dec}$ as follows
\begin{align}
    P_\mathrm{fail, dec} &= \sum_s \Prob{\mathcal{D}_f \mid \rv{S} = s} \, p_{\rv{S}}(s) \notag \\
    &\le \sum_{s > \tau} p_{\rv{S}}(s) + \sum_{s \le \tau} \Prob{\mathcal{D}_f \mid \rv{S} = s} \, p_{\rv{S}}(s)\,.
\end{align}

When $\rv{S} \le \tau$, there are two potential sources of decoder failure: one arises from too many errors on the data qubits, and the other from repeated errors in the syndrome measurements. 
The latter, a failure due to measurement errors, can only occur when $\rv{S} = t+1$. 
To account for this, we include a correction term of the form $(\tau - t) p_{\rv{S}}(t+1) P_\mathrm{meas}$, where $(\tau - t)$ acts as an indicator (equal to 1 if $t < \tau$ and 0 otherwise), and $P_\mathrm{meas}$ is an upper bound on the probability that $t+1$ consecutive measurement errors occur. 
The derivation of $P_\mathrm{meas}$ is deferred to later in this appendix. 
Note that the joint occurrence of both error types is ignored in this upper bound, as it does not affect the validity of the result and does not significantly impact on it.

We now refine the term that accounts for decoding failures caused by an excessive number of errors on the data qubits. 
By invoking the \ac{FT} gadget condition, we assume that out of $\rv{S}$ total faults, $\rv{Q}$ are effectively propagated to the data qubits. 
Let $P_\mathrm{fd}$ denote the probability that an error at a faulty location leads to an error on a data qubit. 
Then, the conditional probability distribution of $\rv{Q}$ given $\rv{S} = s$ follows a binomial form
\begin{align}
p_{\rv{Q}|\rv{S}}(q|s) = \binom{s}{q} P_\mathrm{fd}^q (1 - P_\mathrm{fd})^{s - q} \,.
\end{align}
The derivation of $P_\mathrm{fd}$ is deferred to later in this appendix.
Next, let us condition on the number of channel errors $\rv{C}$ and the total number of data qubit errors after syndrome extraction, denoted by the random variable $\rv{E}$. 
Recalling that $f(e)$ is the probability that the decoder fails the decoding when there are exactly $e$ errors in the data qubits, we have
\begin{align}
    &P_\mathrm{fail, dec} \le 1 - \sum_{s \le \tau} p_{\rv{S}}(s) + (\tau - t) p_{\rv{S}}(t+1) P_\mathrm{meas} \notag \\
    &+ \sum_c \sum_{s \le \tau} \sum_{q} \sum_e f(e) \, p_{\rv{E} \mid \rv{C}, \rv{Q}}(e \mid c, q) \, p_{\rv{Q} \mid \rv{S}}(q \mid s) \, p_{\rv{C}}(c) \, p_{\rv{S}}(s) \,. \notag
\end{align}

To simplify the analysis, we approximate $p_{\rv{E} \mid \rv{C}, \rv{Q}}(e \mid c, q)$ by a probability mass function derived from an idealized occupancy problem, which we denote by $g(e \mid c, q)$. 
This approximation ignores the possibility that multiple errors on the same qubit might cancel.
However, since our goal is to upper bound $P_\mathrm{fail, dec}$, this simplification is conservative and valid due to the fact we are considering more errors than the actual ones. 
Substituting this bound, we obtain
\begin{align}
    &P_\mathrm{fail, dec} \le 1 - \sum_{s \le \tau} p_{\rv{S}}(s) + (\tau - t) \, p_{\rv{S}}(t+1) \, P_\mathrm{meas} \notag \\
    &+ \sum_c \sum_{s \le \tau} \sum_{q} \sum_e f(e) \, g(e \mid c, q) \, p_{\rv{Q} \mid \rv{S}}(q \mid s) \, p_{\rv{C}}(c) \, p_{\rv{S}}(s) \,. \notag
\end{align}

We now derive all terms that were previously postponed.
Let us start with $g(e \mid c, q)$. 
Consider an occupancy problem with $n$ bins (representing data qubits) in which we firstly place $c$ channel errors (``channel balls'') into $c$ distinct bins. 
Next, we place $q$ syndrome extraction errors (``syndrome balls'') into the bins, allowing arbitrary overlaps.
Note that the minimum number of data qubit errors is $e = c$, which occurs when all syndrome balls fall into bins already occupied by channel balls. 
The maximum is $e = \min\{n, c+q\}$, when all syndrome balls land into unoccupied bins or when all bins end up occupied.
Let $\rv{U}$ denote the number of syndrome balls that land in bins not already occupied by channel balls. The distribution of $\rv{U}$ is binomial
\begin{align}
    p_{\rv{U}}(u) = \binom{q}{u} \left(1 - \frac{c}{n} \right)^u \left( \frac{c}{n} \right)^{q - u}\,.
\end{align}
Using the law of total probability, we write
\begin{align}
    g(e \mid c, q) = \sum_u g(e \mid c, q, u) \, p_{\rv{U}}(u)\,.
\end{align}
To compute $g(e \mid c, q, u)$, we determine the probability that exactly $e - c$ out of the $n - c$ bins unoccupied by channel balls are hit by the $u$ syndrome balls.
This is a classical occupancy problem, and by the inclusion-exclusion principle~\cite{Feller91:Probability}, we have
\begin{align}
    g(e \mid c, q, u) 
    = \binom{n - c}{e - c} \sum_{v = 0}^{e - c} (-1)^v \binom{e - c}{v} \left( \frac{e - c - v}{n - c} \right)^u\,.
\end{align}
This completes the derivation. The final expression for $g(e \mid c, q)$ follows by substitution and simplification using the binomial theorem.

\begin{figure}[t]
    \centering
    \resizebox{\columnwidth}{!}{ 
        \input{Figures/error_flag.txt}
    }
    \caption{Examples of syndrome extraction faults propagated to data qubits.
    (blue) An $\M{X}$ error occurring on a data qubit as a result of a fault during a CNOT gate with the syndrome qubit.
    (purple) An $\M{X}$ error occurring on the syndrome qubit, which is then propagated to a data qubit by a subsequent CNOT gate.
    (red) A flag measurement error leading to an incorrect correction, i.e., an $\M{X}$ operator is erroneously applied to a data qubit. }
    \label{fig:error_flag}
\end{figure}

Next, let us proceed in deriving $P_\mathrm{fd}$.
For clarity in the derivation, we illustrate in Fig.~\ref{fig:error_flag} several types of error events that will be counted in the following analysis.
To this aim, we estimate the number of faulty locations that can propagate errors to data qubits.  
Since we are aiming to derive an upper bound, we allow for possible overcounting in this analysis.
This is done for the sake of generality of the bound, not sticking with a particular \ac{FT} error correction logic using flags.
Firstly, note that in a flag-based syndrome extraction circuit, each generator has $\gamma_i$ faulty locations, specifically, those corresponding to the controlled gates acting on data qubits, that can propagate any of the three Pauli errors to the data.  
Thus, summing over all generators, we obtain a contribution of $\sum_i 3\gamma_i$ possible errors that affect the data qubits.
Secondly, consider the faulty location on the ancilla qubit. 
All faulty locations between the Hadamard gates (excluding the last controlled gate) can propagate errors to the data qubits through the controlled gates. 
For each such location, two out of the three Pauli errors can lead to propagation.  
Therefore, across all generators, this contributes a factor of $2\sum_i \gamma_i - 1 + 2 \, n_\mathrm{flag}(\gamma_i, d)$.
Note that, errors on the first Hadamard are not considered since they form an element of the stabilizer, having no impact on the codeword.
Lastly, consider the faulty locations on the flag qubits. 
Each of the four faulty locations associated with a flag ancilla could, in principle, propagate an error to the data qubits due to the \ac{FT} mechanism that uses flags to apply corrections. Again, two out of the three Pauli errors can lead to such propagation.  
Thus, summing over all generators, this yields a contribution of $2\sum_i 4 \, n_\mathrm{flag}(\gamma_i, d)$.
Collecting all these terms and dividing by the total number of possible error patterns $3N_\mathrm{FL}$ we find $P_\mathrm{fd}$ representing an upper bound to the probability that an error occurring on a faulty location is propagated to a data qubit.
As an example of overcounting, consider a case where, in a particular \ac{FT} syndrome extraction circuit, a single error propagates to one data qubit and is detected by the flags. 
The \ac{FT} error correction logic, depending on the flag outcome, could either apply no correction or correct the error, consistent with the \ac{FT} gadget condition.
In cases where the logic successfully corrects such an error, some of the faults counted in the analysis would not actually lead to errors in the data, resulting in a conservative overestimate.

Finally, let us derive $P_\mathrm{meas}$.
To this aim, we must upper bound the probability that $t+1$ consecutive measurement errors occur on the same syndrome bit.
Let us first focus on the $i$-th syndrome bit. 
As we did for $P_\mathrm{fd}$, we count the number of faults that could flip the measurement outcome.
Note that this also constitutes an overestimate, since some of these faults may propagate to the data qubits, thereby altering subsequent measurement outcomes.
Counting these error events, we note that at each faulty location on the ancilla qubit, two out of the three possible Pauli errors result in a flipped measurement outcome.  
For the flag qubits, only the faulty location after the first CNOT gate can propagate two out of three Pauli errors to the ancilla qubit.  
This results in a total of $2\gamma_i + 8 + 6\,n_\mathrm{flag}(\gamma_i, d)$ errors that can flip the measurement outcome of the $i$-th syndrome bit.  
Dividing by the total number of error patterns in the faulty locations, $3N_\mathrm{FL}$, we obtain the probability that the $i$-th syndrome bit is flipped, given that a single fault has occurred during this round of syndrome extraction.
Consequently, the probability that $t+1$ measurement errors affect the same syndrome bit (each occurring in a different round) is upper bounded by
\begin{align}
P_\mathrm{meas} = \sum_{i=1}^{n-k} \left( \frac{2\gamma_i/3 + 8/3 + 2\,n_\mathrm{flag}(\gamma_i, d)}{N_\mathrm{FL}} \right)^{t+1}\,.
\end{align}
We conservatively upper bound the probability that $t+1$ errors occur in consecutive rounds of syndrome extraction with $P_\mathrm{meas}$.

\section{Proof of Theorem~\ref{th:PresLB}}
\label{app:PresLB}

We consider as error sources only the faulty locations that act directly on the data qubits. 
Focusing on a single qubit, we separate the last faulty location from the earlier ones. 
A residual error can occur in two distinct cases: either all previous faulty locations introduce no error and the last one does, or the previous faulty locations introduce an error and the last one does not cancel it out.

Let us assume firstly that all $\mathbf{X}$-type stabilizer generators are measured, followed by all $\mathbf{Z}$-type generators. 
For a residual error to persist, the syndrome must remain unchanged during the final round. 
This happens when CZ gates used for measuring the $\mathbf{Z}$ generators introduce $\mathbf{Z}$ errors, since $\mathbf{X}$ or $\mathbf{Y}$ errors would alter the syndrome and be detected. 
These $\mathbf{Z}$ errors have to be accounted for all CZ gates, excluding the last one (i.e., $v_j^{(\mathrm{z})}-1$ when examining the qubit $j$), as well as to the final CX gate acting on the qubit under examination.
On the other hand, for the last CZ gate, we consider all possible errors.
It is important to note that other mechanisms can also lead to residual errors, though they are more difficult to systematically account for.
However, since we are deriving a lower bound, we can safely neglect these additional contributions, at the cost of obtaining a looser estimate.
For instance, $\mathbf{X}$ errors may also become residual if accompanied by measurement errors that render them undetectable.

For this reason, we obtain that a single qubit could have a residual error with probability greater than
\begin{align}
    q_j = r_{\mathrm{Z},j} \, (1-p_\mathrm{FT}/3) + (1-r_{\mathrm{Z},j}) \, p_\mathrm{FT}
\end{align}
where $r_{\mathrm{Z},j}$ represents the probability to have a residual error $\mathbf{Z}$ before entering on the last CZ, which is the probability to have an odd number of $\mathbf{Z}$ errors on the previous CZs and the last CX.
This probability problem is the same occurring on \ac{LDPC} codes when evaluating the probability to have an odd number of ones in a string of bits~\cite{Ryan:ChannelCodes}. 
In this way, we obtain
\begin{align}
    r_{\mathrm{Z},j} = \frac{1}{2} - \frac{1}{2} \left(1-\frac{2p_\mathrm{FT}}{3-2p_\mathrm{FT}}\right)^{v_j^{(\mathrm{z})}}
\end{align}
where the probability to have a single $\mathbf{Z}$ on a gate must be conditioned on the fact that both $\mathbf{X}$ and $\mathbf{Y}$ are not occurred.
To have a lower bound we can use for all qubits an error probability $q(p_\mathrm{FT})=\min_{j=1, \dots, n}\{ q_j\}$, leading to
\begin{align}
    q(p_\mathrm{FT}) = \frac{1}{2} + \frac{p_\mathrm{FT}}{3} - \frac{(3-4p_\mathrm{FT})^{v_\mathrm{m}+1}}{6\,(3-2p_\mathrm{FT})^{v_\mathrm{m}}}\,.
\end{align}
where $v_\mathrm{m} = \min_{j=1, \dots, n}\{v_j^{(\mathrm{z})}\}$.
On the other hand, if we consider the best possible ordering we have that $v_\mathrm{m} = 1$ since we have to account only for the second-to-last controlled gate.
Finally, by considering the intrinsic error correction of a stabilizer codes as in \eqref{eq:PLwithAz}, we obtain
\begin{align}
    P_\mathrm{res} \ge 1 - \sum_{w=0}^n \frac{A_w}{4^k} \, q^w(p_\mathrm{FT}) \, (1-q(p_\mathrm{FT}))^{n-w}
\end{align}
which concludes the proof.

\section{Proof of Theorem~\ref{th:PresUB_Flag}}
\label{app:PresUB_Flag}
To derive an upper bound for $P_\mathrm{res}$ we assume to be in the low-error regime ($p_\mathrm{FT} \ll 1$) and that if at least one faulty location, that in principle could lead to a residual error has run into an error, we fail.
Under the assumption that $p_\mathrm{FT} \ll 1$, we can restrict our analysis to single-error patterns.
In this case, it suffices to count how many faulty locations can lead to a residual error, and for each, determine which of the three Pauli errors ($\mathbf{X}$, $\mathbf{Y}$, or $\mathbf{Z}$) could cause it.

Let us start by considering the faulty locations on the data qubits.
We have $n$ faulty locations which are the last ones of each qubit.
We can account for them with a term $n\,p_\mathrm{FT}$ since all three Pauli errors could occur here.
Then, we have to consider the previous faulty locations due to other controlled-gates on the data qubits.
As a worst-case scenario, consider the ordering in which all $\mathbf{X}$-type generators are measured before any of the $\mathbf{Z}$-type generators.
Here, for the sake of clarity and presentation, we focus on \ac{CSS} codes where the $\mathbf{X}$ and $\mathbf{Z}$ generators are equal in number and have the same weights. 
In this way it does not matter if we measure first all $\mathbf{X}$-type generators or first all $\mathbf{Z}$-type generators.
If instead the $\mathbf{X}$ generators outnumber the $\mathbf{Z}$ generators, the same reasoning applies symmetrically by swapping $\mathbf{Z}$ with $\mathbf{X}$ and $v_j^{(\mathrm{z})}$ with $v_j^{(\mathrm{x})}$.
Under this assumption, we have that all CZs from $\mathbf{Z}$ generators could introduce residual $\mathbf{Z}$ errors, while both $\mathbf{X}$ and $\mathbf{Y}$ would be detected by the syndrome.
Moreover, the last CX on each qubit from the $\mathbf{X}$ generators could also have residual $\mathbf{Z}$ errors using the same argument. 
This ends up with a factor of $D_\mathrm{res}\,p_\mathrm{FT}/3$ where
$D_\mathrm{res} = \sum_{j=0}^{n} v_j^{(\mathrm{z})}$.
Accounting for a general \ac{CSS} code with asymmetries in the generators we have that $D_\mathrm{res} = \max\left\{ \sum_{j=0}^{n} v_j^{(\mathrm{z})}, \sum_{j=0}^{n} v_j^{(\mathrm{x})} \right\}$

We now consider the faulty locations on the ancillary qubit used for syndrome extraction. 
We focus only on $\mathbf{Z}$-type stabilizer generators, since any single-error pattern affecting an ancillary qubit used to measure $\mathbf{X}$-type generators is either detected by those generators themselves or by the subsequent measurement of the $\mathbf{Z}$-type generators.
To ensure that the syndrome remains unaffected, we discard both $\mathbf{Z}$ and $\mathbf{Y}$ errors that occur between the two Hadamard gates, as they would alter the measurement outcome. 
Considering the initialization step and the first Hadamard gate, we observe that all three Pauli errors can be neglected. 
Although some of these errors may not be detected by the syndrome, they effectively introduce elements of the stabilizer group and thus do not lead to residual errors.
Similarly, errors occurring after the second Hadamard gate, including those in the measurement process, can be ignored, since no subsequent two-qubit gates exist that could propagate such errors to the data qubits.
This leaves only the faulty locations between the two Hadamard gates as relevant. 
Each such location contributes a probability of $p_\mathrm{FT}$, and only $\mathbf{X}$ errors can lead to residual errors (as $\mathbf{Z}$ and $\mathbf{Y}$ are excluded). 
This results in a contribution of $A_\mathrm{res} \, p_\mathrm{FT}/3$, where $A_\mathrm{res} = \sum_{i\in\mathcal{G}_z} \gamma_i - 1 + 2n_\mathrm{flag}(\gamma_i, d)$, with $\mathcal{G}_z$ representing the set of the indices of all $\mathrm{Z}$ generators.
The ``$-1$'' term subtracts the last faulty location on the final controlled gate before the second Hadamard, which cannot propagate an error to the data and therefore does not contribute to residual errors.
Accounting for a general \ac{CSS} code with asymmetries in the generators we have that $A_\mathrm{res} = \sum_{i\in\mathcal{G}_\mathrm{M}} \gamma_i - 1 + 2n_\mathrm{flag}(\gamma_i, d)$, with $\mathcal{G}_\mathrm{M}$ representing the set with larger cardinality between $\mathcal{G}_\mathrm{z}$ and $\mathcal{G}_\mathrm{x}$.

Finally, we consider the faulty locations on the flag qubits.
Similar to the ancillary qubit, we focus only on $\mathbf{Z}$-type stabilizer generators, since any single-error pattern affecting an ancillary qubit used to measure $\mathbf{X}$-type generators is either detected by those generators themselves or by the subsequent measurement of the $\mathbf{Z}$-type generators.
Each flag qubit has four faulty locations.
We observe that both $\mathbf{Z}$ and $\mathbf{Y}$ errors occurring on the first two faulty locations would be propagated to the ancillary qubit, altering the syndrome. 
For this reason, only $\mathbf{X}$ errors are counted for them.
For the last two faulty location we need to count both $\mathbf{X}$ and $\mathbf{Y}$ errors.
Considering as a worst case that an erroneous flag always lead to a residual error we have to account for a term $F_\mathrm{res}\, p_\mathrm{FT}/6 + F_\mathrm{res}\, p_\mathrm{FT}/3$ where $F_\mathrm{res} = 4\sum_{i\in \mathcal{G}_z} n_\mathrm{flag}(\gamma_i, d)$.
Accounting for a general \ac{CSS} code with asymmetries in the generators we have that $F_\mathrm{res} = 4\sum_{i\in \mathcal{G}_\mathrm{M}} n_\mathrm{flag}(\gamma_i, d)$.

Gathering all the terms together, the claim follows for the flag-based syndrome extraction.
On the other hand, using a cat state, the reasoning presented above can be directly extended. 
When the cat state is constructed via its generator measurement, both equations~\eqref{eq:PresUB_Flag} and~\eqref{eq:PresUB_Flag_Param} remain applicable, with the sole modification that $A_\mathrm{res}$ must be increased by a factor of $\sum_{i \in \mathcal{G}_\mathrm{M}} \gamma_i$.
This is because the cat state is prepared using a flag-based circuit.


\bibliographystyle{IEEEtran}
\bibliography{Files/IEEEabrv,Files/StringDefinitions,Files/StringDefinitions2,Files/refs}

\end{document}